\newcommand\encadremath[1]{\vbox{\hrule\hbox{\vrule\kern8pt 
\vbox{\kern8pt \hbox{$\displaystyle #1$}\kern8pt} 
\kern8pt\vrule}\hrule}}
\def\enca#1{\vbox{\hrule\hbox{
\vrule\kern8pt\vbox{\kern8pt \hbox{$\displaystyle #1$}
\kern8pt} \kern8pt\vrule}\hrule}}
\newcommand\figureframex[3]{
\begin{figure}[bth]
\hrule\hbox{\vrule\kern8pt 
\vbox{\kern8pt \vbox{
\begin{center}
{\mbox{\epsfxsize=#1.truecm\epsfbox{#2}}}
\end{center}
\caption{#3}
}\kern8pt} 
\kern8pt\vrule}\hrule
\end{figure}
}
\newcommand\figureframey[3]{
\begin{figure}[bth]
\hrule\hbox{\vrule\kern8pt 
\vbox{\kern8pt \vbox{
\begin{center}
{\mbox{\epsfysize=#1.truecm\epsfbox{#2}}}
\end{center}
\caption{#3}
}\kern8pt} 
\kern8pt\vrule}\hrule
\end{figure}
}
\newtheorem{theorem}{Theorem}[section]
\newtheorem{remark}{Remark}[section]
\newtheorem{proposition}{Proposition}[section]
\newtheorem{lemma}{Lemma}[section]
\newtheorem{corollary}{Corollary}[section]
\newtheorem{definition}{Definition}[section]
\def\br{\begin{remark}\rm\small}
\def\er{\end{remark}}
\def\bt{\begin{theorem}}
\def\et{\end{theorem}}
\def\bd{\begin{definition}}
\def\ed{\end{definition}}
\def\bp{\begin{proposition}}
\def\ep{\end{proposition}}
\def\bl{\begin{lemma}}
\def\el{\end{lemma}}
\def\bc{\begin{corollary}}
\def\ec{\end{corollary}}
\def\beaq{\begin{eqnarray}}
\def\eeaq{\end{eqnarray}}
\newcommand{\beq}{\begin{equation}}
\newcommand{\eeq}{\end{equation}}
\newcommand{\bea}{\begin{eqnarray}}
\newcommand{\eea}{\end{eqnarray}}
\renewcommand{\and}{{\qquad {\rm and} \qquad}}
\newcommand{\Pint}{{\int\kern -1.em -\kern-.25em}} 
\newcommand{\Vol}{\mathrm{Vol}}
\begin{document}

\sloppy

\pagestyle{empty}
\hfill IPHT T17/006\\
\indent \hfill CRM-2016-nnnn
\addtolength{\baselineskip}{0.20\baselineskip}
\baselineskip 16pt 
\begin{center}
\begin{Large}\fontfamily{cmss}
\fontsize{20pt}{30pt}
\selectfont
\medskip
\textbf{Local properties of the random Delaunay triangulation model and topological models of 2D gravity}
\\ 
\bigskip
\bigskip
\end{Large}
\bigskip
\bigskip
{\sl S\'everin Charbonnier}\hspace*{0.05cm}${}^2$,
{\sl Fran\c cois David}\hspace*{0.05cm}${}^1$, {\sl Bertrand\ Eynard}\hspace*{0.05cm}${}^{2,\,3}$
\\
\vspace{6pt}

${}^1$ Institut de Physique Th\'eorique,\\
CNRS, URA 2306, F-91191 Gif-sur-Yvette, France\\
CEA, IPhT, F-91191 Gif-sur-Yvette, France\\
\vspace{6pt}
${}^2$
Institut de Physique Th\'eorique,\\
CEA, IPhT, F-91191 Gif-sur-Yvette, France\\
CNRS, URA 2306, F-91191 Gif-sur-Yvette, France\\
\vspace{6pt}
${}^3$ CRM, Centre de Recherche Math\'ematiques, Montr\'eal QC Canada.\\
\end{center}

\vspace{20pt}
\noindent{\bf Abstract:}

Delaunay triangulations provide a bijection between a set of $N+3$ points  in generic position in the complex plane, and the set of triangulations with given circumcircle intersection angles.
The uniform Lebesgue measure on these angles translates into a K\"ahler measure for Delaunay triangulations, or equivalently on the moduli space $\mathcal M_{0,N+3}$ of genus zero Riemann surfaces with $N+3$ marked points.
We study the properties of this measure. First we relate it to the topological Weil-Petersson symplectic form on the moduli space $\mathcal M_{0,N+3}$.
Then we show that this measure, properly extended to the space of all triangulations on the plane, has maximality properties for Delaunay triangulations. Finally we show, using new local inequalities on the measures, that the volume $\mathcal{V}_N$ on triangulations with $N+3$ points is monotonically increasing when a point is added,  $N\to N+1$. 
We expect that this can be a step towards seeing that the large $N$ limit of random triangulations can tend to the Liouville conformal field theory.

\vspace{26pt}
\begin{center}
August 14, 2018
\end{center}
\pagestyle{plain}
\setcounter{page}{1}

\newpage
\tableofcontents
\newpage

\section{Introduction}
\newcommand{\EofT}{\mathcal{E}(T)}
\newcommand{\VofT}{\mathcal{V}(T)}
\newcommand{\FofT}{\mathcal{F}(T)}
\subsection{Framework}
It has been argued by theoretical physicists
\cite{David1984}\cite{Frohlich1985}\cite{Kazakov1985} that the continuous limit of large planar maps should be the same thing as two dimensional (2D) quantum gravity, i.e. a theory of random Riemannian metrics (for general references on this subject of discrete and continuous Quantum Geometry and Quantum Gravity see e.g. \cite{AmbjornDurhuusJonsson:2005}). 
In a celebrated paper, Polyakov \cite{Polyakov1981207} had already shown that (in the framework of non-critical string theory) continuum 2D quantum gravity can be reformulated in the  conformal gauge as a 2D integrable conformal field theory, the quantum Liouville theory. Together with Kniznik and Zamolodchikov, he showed later that the scaling dimensions of its local operators are encoded into the so called KPZ relations.
\cite{KPZ:1988}   \cite{David1988-F}  \cite{David1988-M}\cite{DistlerKawai1989}.
Going back to the discrete case, planar maps have been studied since decades by combinatorial and random matrix methods, in particular recursion relations.
Liouville theory has been widely studied by the technics of integrable systems and conformal field theory.
The many explicit results thus obtained corroborate the equivalence beetween the continuum limit (large maps) of planar map models and quantum Liouville theory.

Another approach, initiated  by Witten \cite{Witten:1990}, is to formulate 2D quantum gravity as a topological theory (topological gravity). This leads to the 2D topological Witten-Kontsevich intersection theory, notably studied by Kontsevich \cite{Kontsevich1992}. In this topological theories, the topological observables obey also recursion relations, and can be studied directly by matrix model technics (the Kontsevich model, the Penner model, etc.) with no need to take a continuum limit of large maps (since the theory is topological). 
Again, many explicit results corroborate the relations between 2d topological gravity, Liouville/CFT theory and 2d (non-topological) gravity  (see e.g.\cite{KostovPonsotSerban2004}\cite{ChekhovEynardRibaut2013}).
For some recent results, with a first construction of a ``large Strebel graph limit'' of a topological gravity, which shows that is coincide with the expected 2d CFT gravity, see \cite{CharbonierEynardDavid2017}. 

Random maps are now extensively studied by mathematicians. It has been shown recently, by combinations of combinatorics and probabilistics methods, that the continuous limit of large planar maps equipped with the graph distance, and thus viewed as metric spaces, exists (with the topology induces by the Gromov-Hausdorff distance between metric spaces), and converges towards the so-called ``Brownian map''
\cite{LeGall2013}\cite{Miermont2013}
(see the references therein for previous litterature).
Despite numerous ongoing efforts, the problem which has so far remained elusive is to prove the general equivalence of this limit (in the Gromov-Hausdorff topology on the ``abstract'' space of metric spaces) with the Liouville CFT in the plane, which makes reference to an explicit conformal embedding of 2D metric in the plane, via the uniformization theorem.
Tackling this problem requires methods of embedding planar maps into the Euclidian plane.
In the simple case of planar triangulations, many methods are available. 
Let us quote the "barycentric" Tutte embedding (see e.g. \cite{AmbjornBarkleyBudd2012}), and the "Regge" embedding (see e.g. sect. 6 of \cite{Hamber2009}), which are not conformal in any sense.
The exact uniformization embedding is fully conformal, but difficult to study.
The ``circle packing''  methods (see e.g. \cite{Benjamini2010}) have some interesting conformal properties.

\subsection{Summary of the model and the results}
This study deals with a very natural extension of the circle packing and circle pattern methods, introduced by two of the authors in \cite{DavidEynard2014}.
This embedding relies on the patterns of circumcircles of Delaunay triangulations.
Using the fact that the whole (moduli) space of surfaces is obtained by varying circumcircle intersection angles, they showed that the uniform measure on random planar maps, equipped with the uniform Lebesgue measure on edge angles variables, gets transported by the circle pattern embedding method, to a conformally invariant spatial point process (measure on point distributions) in the plane with many interesting properties: (i)
it has an explicit representation in term of geometrical objects (3-rooted trees) on Delaunay triangulations; 
(ii) it is a K\"ahler metric whose prepotential has a simple formulation in term of hyperbolic geometry;
(iii) it can be written as a ``discrete Fadeev-Popov'' determinant, very similar to the  conformal gauge fixing Fadeev-Popov determinant of Polyakov; (iv) it can also be written locally as a combination of Chern classes, as in Witten-Kontsevich intersection theory.

In this paper we pursue the study of this model in two directions. Firstly, in section~\ref{sRelWeilPet} we make precise the relation between our model and Witten-Kontsevich intersection theory. The context is geometry and topology.s
We show that our measure is equivalent to the Weil-Peterson volume form on the moduli space of the sphere with marked points (punctures) $\overline{\mathcal{M}}_{0,n}$. 
This equivalence is a non-trival result. It shows that the analysis leading to point (iv) in \cite{DavidEynard2014} was incomplete, with an incorrect conclusion.
It requires a precise study of the Chern class formulation of \cite{DavidEynard2014} at the boundaries between different domains in moduli space corresponding to different Delaunay triangulations, as well as a study of the relation between our geometrical formulation of the volume form and the so-called $\lambda$-length parametrization of $\overline{\mathcal{M}}_{0,n}$.
This result proves that, at least as far as topological (i.e. global) observables are concerned, our model is in the same universality class than pure topological gravity ($\gamma=\sqrt{8/3}$ Liouville or (3,2) Liouville CFT). 
This was a conjecture of \cite{DavidEynard2014}, which now proven.

Secondly, in section~\ref{sLocIneq} we start to study the specific local properties of the measure of the model, which are related to the specific conformal embedding of a discrete random metric defined by the Delaunay triangulations. Basically nothing is known on these properties (except that it is conformal). The context here is measure and probability theory.
The study of these properties should be crucial to make precise the existence of a local continuum limit for this random Delaunay triangulation model and its relation with the Liouville theory. 

We show in Section~\ref{sMaxPropMes} an interesting property of maximality for the measure: our measure on Delaunay triangulations can be analytically continued to non-Delaunay triangulations based on the same points distributions, but is maximal exactly for Delaunay triangulations  (\textit{i.e.} the weight given to a Delaunay triangulation by this measure is superior to the one given to any other triangulation of the same set of points). This could open the possibility of some convexity properties.
Then in Section~\ref{ssGrowVol} we study local bounds on the measure when one considers the process of adding locally a new vertex, thus going from a triangulations with $N$ vertices to a triangulation with $N+1$ vertices.
We get both local and global lower bounds, and deduce that the partition function $Z_N=V_N/N!$ grows at least like $(\pi^2/8)^N$.
These results are encouraging first steps towards the construction of a continuum limit. Let us stress again that they deal with local properties of the embedding of the triangulation in the plane, not global topological properties of the model.

\section{Reminders}
\subsection{The Delaunay triangulation model}
We recall the notations and definitions of \cite{DavidEynard2014}.
Let $T$ denote an abstract triangulation of the Riemann sphere $\mathcal{S}_2=\mathbb{C}\cup\{\infty\}$.
$\VofT$, $\EofT$ and $\FofT$ denote respectively the sets of vertices $v$, edges $e$ and faces (triangles)  $f$ of $T$.
Let ${\mathcal{T}}_N$ be the set of all such $T$ with $N=|\VofT|$ vertices, hence $|\EofT|=3(N-2)$ and $|\FofT|=2(N-2)$.

An \emph{Euclidean triangulation} $\widetilde T=(T,\boldsymbol{\theta})$ is a triangulation $T$ plus an associated edge angle pattern $\boldsymbol{\theta}=\{\theta(e); e\in \EofT\}$, such that 
\begin{equation}
\label{thetacond}
0\le\theta(e)<\pi\ .
\end{equation}
An Euclidean triangulation is \emph{flat} if for each vertex $v\in \VofT$, the sum of the angles of the adjacent edges satisfy
\begin{equation}
\label{sumthetav}
\sum_{e\to v} \theta(e)=2\pi
\end{equation}

Given a set of $N$ points with complex coordinates $z_v$ in $\mathcal{S}_2$ (with its standard complex structure), the associated Delaunay triangulation in a flat Euclidean triangulation, such that 
the angle $\theta(e)$ is the angle of intersection between the circumcircles of the oriented triangles (faces) adjacent to $e$. 
 See fig. \ref{definition_angles}
The edge angle pattern satisfies in addition the condition that for any closed oriented contour $\mathcal{C}^\star$ on the dual graph $T^\star$ of the triangulation $T$ (the Vorono{\"\i}   diagram), the sum of the angles associated to the edges $e$ dual (orthogonal) to the edges $e^\star$ of $\mathcal{C}^\star$ satisfy
\begin{equation}
\label{sumthetacontour}
\sum_{e\perp \mathcal{C}^\star}\theta(e)\ge 2\pi
\end{equation}
(this condition was not discussed in \cite{DavidEynard2014}). 

\begin{figure}[h]
\begin{center}
\includegraphics[width=3in]{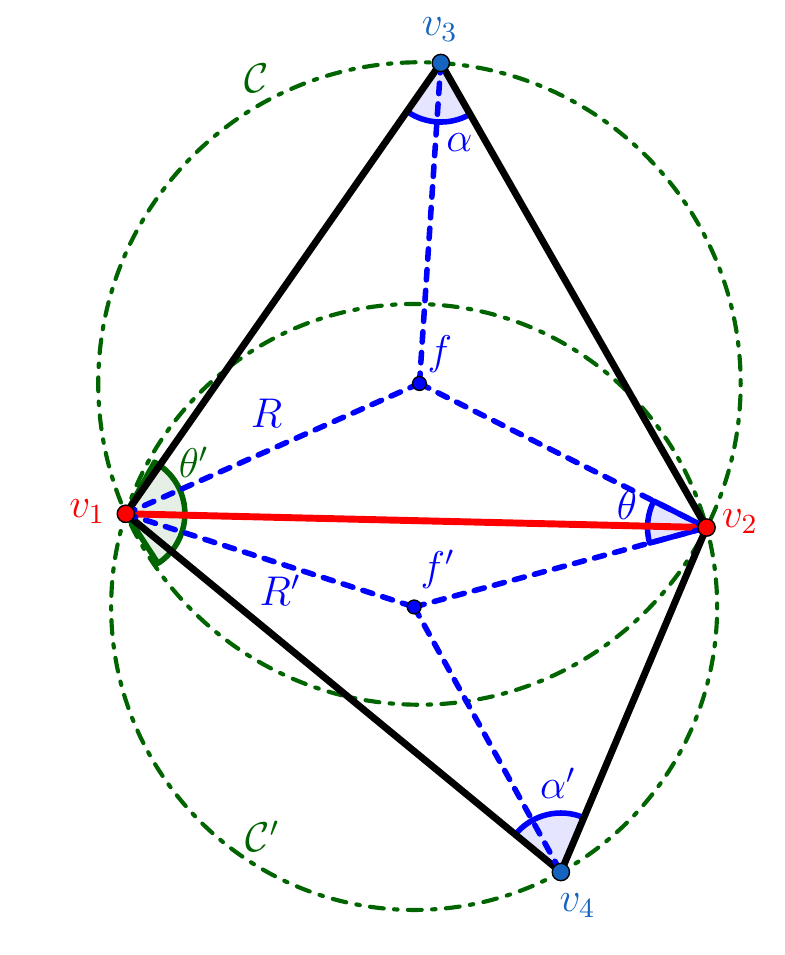}
\caption{The triangles $f$ and $f'$, the circumcircles $\mathcal{C}$ and $\mathcal{C}'$ and angles $\theta$ and $\theta'=\pi-\theta$ associated to an edge $e=(v_1,v_2)$ of a Delaunay triangulation. Here, $R$ and $R'$ are the radii of $\mathcal{C}$ and $\mathcal{C}'$ respectively.}
\label{definition_angles}
\end{center}
\end{figure}

\begin{definition}
\label{defTtilde}
We denote by $\widetilde{\mathcal{T}}_{N}^{f}$ the set of Euclidean triangulations $\tilde{T}=(T,\boldsymbol{\theta})$ with $N$ vertices that satisfy \ref{thetacond}, \ref{sumthetav} and \ref{sumthetacontour}. The subscript ``$f$'' is for ``flat''.
\end{definition}

A theorem by Rivin \cite{Rivin1994} states that this is in fact an angle pattern preserving bijection between $\widetilde{\mathcal{T}}_N^{f}$ and the set of Delaunay triangulations of the complex plane modulo M\"obius transformations, which can be identified with $\boldsymbol{\mathfrak{D}}_N=\mathbb{C}^N/ \mathrm{SL}(2,\mathbb{C})$ (it is angle pattern preserving in the sense that the angles of the Euclidean triangulations are the same as the angles defined by the intersection of the circumcircles of the Delaunay triangulation associated by the bijection).
It is an extension of the famous theorem by Koebe-Andreev-Thurston  \cite{Koebe1936} stating that there is a bijection between simple triangulations and circle packings in complex domains, modulo global conformal transformations. The proof relies on the same kind of convex minimization functional, using hyperbolic 3-geometry,  rather than for the original circle packing case (see \cite{Rivin1994} and \cite{BobenkoSpringborn2004}).

The model of random triangulation considered in \cite{DavidEynard2014} is obtained by taking the discrete uniform measure on triangulations  with $N$ vertices, times the flat Lebesgue measure on the angles.
Since the PSL(2,$\mathbb{C}$) invariance allows to fix 3 points in the triangulations, from now on we work with triangulations and points ensembles with $M=N+3$ points. The measure on $\widetilde{\mathcal{T}}_M^{f}$ is 

\begin{equation}
\label{muT}
\mu(\widetilde{T})=\mu(T,d \boldsymbol{\theta})= {\rm uniform}(T)\ \prod_{e\in \EofT} \hskip -1ex d\theta(e)\quad \hskip -1em\prod_{v\in \VofT}\hskip -1ex \delta\big(\sum_{e\mapsto v} \theta(e)-2\pi\big)
\  \prod_{\mathcal{C}^\star} \Theta\big(\sum_{e\perp \mathcal{C}^\star} \theta(e)-2\pi\big)
\end{equation}
 where, $\Theta(x)=\begin{cases}1 \mathrm{\,\,if\,\,} x\geq0\\
0 \mathrm{\,\, if\,\,} x<0
\end{cases}$ is the Heaviside function.

\subsection{K{\"a}hler form of the measure}
One of the main results of \cite{DavidEynard2014} is the form of the induced measure on the space $\mathfrak{D}_{N+3}$ of Delaunay triangulations on the plane, i.e on the space of distributions of $N+3$ points on the Riemann sphere. The first three points $(z_1,z_2,z_3)$ being fixed by PSL(2,$\mathbb{C}$), the remaining $N$ coordinates are denoted $\mathbf{z}=(z_4,\cdots,z_{N+3}) \in\mathbb{C}^N$, and $T_{\mathbf{z}}=T$ is the associated abstract Delaunay triangulation, uniquely defined if no subset of 4 points are cocyclic. A simple case is when one of the three fixed point is at $\infty$.
\begin{theorem} \cite{DavidEynard2014}
The measure $\mu(T,d\boldsymbol{\theta})\ =\ d\mu(\mathbf{z})$ on $\mathbb{C}^D$ is a K{\"a}hler measure of the local form
\begin{equation}
\label{themeasurez}
d\mu(\mathbf{z})=\prod_{v=4}^{N+3} d^2z_v\ 2^N\ \det\left[{D_{u\bar v}}\right]
\end{equation}
where $D$ is the restriction to the $N$ lines and columns $u,\bar v=4,\,5,\,\cdots\, N+3$ of the K\"ahler metric on $\mathbb{C}^{N+3}$ 
\begin{equation}
\label{Duvbar}
D_{u\bar v}(\{z\})={\frac{\partial}{\partial z_u}}{\frac{\partial}{\partial \bar z_v}} \mathcal{A}_T(\{z\})
\end{equation}
with the prepotential $\mathcal{A}_T$ given by 
\begin{equation}
\label{AsumV}
\mathcal{A}_T= -\sum_{f\in\mathcal{F}(T)} \mathbf{V}(f)
\end{equation}
where the sum runs over the triangles $f$ (the faces) of the Delaunay triangulation $T$ of the Delaunay triangulation $T$ associated to the configuration of points $\{z\}=\{z_v;\,v=1,N+3\}$ in $\mathbb{C}^{N+3}$. For a triangle $f$ with (c.c.w. oriented) vertices $(z_a,z_b,z_c)$,  $\mathbf{V}(f)$ is the hyperbolic volume in the hyperbolic upper half space $\mathbb{H}^3$ of the ideal tetraedron with  vertices $(z_a,z_b,z_c,\infty)$ on its boundary at infinity $\mathbb{C}\cup\{\infty\}$.
\end{theorem}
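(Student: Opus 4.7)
The plan is to pull back the angle-space measure $\mu(T,d\boldsymbol\theta)$ through Rivin's bijection to the moduli space $\mathbb{C}^N$ of configurations (after PSL$(2,\mathbb{C})$ fixing) and identify the resulting Jacobian as a Kähler determinant. First I would write each edge angle explicitly in terms of the vertex coordinates: for an edge $e=(z_a,z_b)$ shared by triangles with third vertices $z_c$ and $z_d$, the intersection angle $\theta(e)$ of the two circumcircles can be expressed as the argument of a Möbius-invariant cross-ratio of $(z_a,z_b,z_c,z_d)$. The resulting formulas define real-analytic functions $\theta(e;\mathbf{z})$ that satisfy the flatness conditions $C_v=\sum_{e\to v}\theta(e)-2\pi=0$ identically, so Rivin's bijection realizes the map $\mathbf{z}\mapsto(\theta(e;\mathbf{z}))_{e\in\EofT}$ as a parametrization of the $2N$-dimensional flat submanifold of the $3(N+1)$-dimensional edge-angle space.

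Next I would compute the pushforward of $\prod_{e}d\theta(e)\,\prod_{v}\delta(C_v)$ to $\mathbb{C}^N$ via this parametrization. Using the standard delta-function formula for a constraint submanifold, this reduces to a single Jacobian built from the partial derivatives $\partial\theta(e)/\partial z_v$ and $\partial\theta(e)/\partial\bar z_v$. The key observation is that $\theta(e)=\mathrm{Im}\,\ln(\text{holomorphic cross-ratio})$, so the $z$- and $\bar z$-partials are conjugate up to $i$, and organize naturally into a Hermitian structure. A dimension count then shows that, after using the $N+3$ flatness constraints to eliminate $N+3$ of the $3(N+1)$ angles, the surviving $2N$ differentials wedge to $2^N\det[M_{u\bar v}]\,\prod_{v=4}^{N+3}d^2z_v$ for some Hermitian matrix $M$ indexed by $u,\bar v\in\{4,\ldots,N+3\}$.

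To identify $M_{u\bar v}$ with $\partial_u\partial_{\bar v}\mathcal{A}_T$ for $\mathcal{A}_T=-\sum_f\mathbf{V}(f)$, I would invoke Milnor's decomposition of the ideal hyperbolic tetrahedron volume: $\mathbf{V}(f)=\Lambda(\alpha_a)+\Lambda(\alpha_b)+\Lambda(\alpha_c)$, where $\Lambda$ is the Lobachevsky function and $\alpha_a,\alpha_b,\alpha_c$ are the interior angles of the Euclidean triangle $(z_a,z_b,z_c)$. Combined with $\Lambda'(\alpha)=-\ln|2\sin\alpha|$ and the law of sines, this makes $\partial_{z_v}\mathcal{A}_T$ a sum of logarithms of the complex edge vectors meeting at $z_v$, and a further derivative in $\bar z_u$ produces the Hermitian matrix $M$ edge by edge, which should match the one obtained from the Jacobian computation.

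The main obstacle will be the precise matching between the edge-based Jacobian of the second step and the face-based Hessian of the third step: the hyperbolic volume gives a prepotential attached to each triangle, but the Jacobian is computed from circumcircle intersection angles sitting on edges, so one must rearrange sums over faces versus edges using the incidence structure of $T$, keeping track of the several factors of $2$ (hence the overall $2^N$ in the statement). One must also check continuity of the prepotential across the codimension-one walls in $\mathbb{C}^N$ where four points become cocyclic and $T$ changes by a flip; the symmetry of the ideal-tetrahedron volume under the two triangulations of a cyclic quadrilateral should make $\mathcal{A}_T$ a globally defined real function of $\mathbf{z}$, but this has to be verified carefully.
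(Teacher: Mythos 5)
This theorem is not proved in the present paper --- it is recalled verbatim from \cite{DavidEynard2014} (the text explicitly defers the details, including the treatment of the point at infinity and the three fixed points, to that reference), and the proof given there proceeds essentially along the lines you describe: each $\theta(e)$ is written as the argument of the M\"obius-invariant cross-ratio of the four vertices adjacent to $e$, the constrained angle measure is pushed forward through this parametrization, and the resulting Hermitian Jacobian (the matrix $D=\frac{1}{4\mathrm{i}}AEA^\dagger$ recalled in Section~3.2.1) is identified with the Hessian of $-\sum_f\mathbf{V}(f)$ via Milnor's Lobachevsky-function decomposition of the ideal tetrahedron volume and the law of sines, with continuity across flips checked separately. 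Your outline is therefore correct and follows essentially the same route as the original proof.
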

\noindent NB: This  statement is a bit loose and some care must be taken in treating the point at $\infty$ and the three fixed points. See \cite{DavidEynard2014} for details. 

\subsection{Relation with topological Witten-Kontsevich intersection theory}
A second result of \cite{DavidEynard2014} is that the measure can be written locally (i.e. for a given triangulation $T$) in term of Chern classes $\psi_v$ of U(1) line bundles $L_v\to \mathcal{M}_{0,N+3}$, (attached to the vertices v) where the space of Delaunay triangulations with $N+3$ points is identified with the moduli space $\mathcal{M}_{0,N+3}$ of the (conformal structures of the) sphere with $N+3$ marked points. More precisely it was shown that locally
\begin{equation}
\label{mupsiN}
\mu(T,\{d\theta\})= {\frac{1}{N!} \, 2^{2N+1}} \left(\sum_v (2\pi)^2 \psi_v\right)^N
\end{equation}
with 
\begin{equation}
\label{psivdef}
\psi_v=c_1(L_v)= {\frac{1}{(2\pi)^2}} \sum_{e'<e\to v}d\theta(e) \wedge d\theta(e')
\end{equation}
where the $e$'s denote the edges adjacent to the vertex $v$, labelled in c.c.w. order.  With this convention, the notation $e'<e\to v$ means that the sum runs on the pairs of edges $e,\,e'$ adjacent to the vertex $v$ and such that their labels satisfy $e'<e$.
$\psi_v$ was defined explicitly as the curvature $d u_v$ of the global U(1) connection
\begin{equation}
\label{uvexpl}
u_v= {\frac{1}{(2\pi)^2}}\sum_{f\to v} \theta(f_+)\, d\gamma_v(f)
\end{equation}
In \ref{uvexpl} the sum runs over the faces $f$ adjacent to the vertex v. 
$\gamma_v(f)$ is the angle between a reference half-line $\gamma_v$ with endpoint $v$ and the half line starting from $v$ and passing through the center of the (circumcircle of the) face $f$. $f_+$ is the leftmost edge of $f$ adjacent to $v$ (see figure \ref{f-uvexpl}).
\begin{figure}[h]
\begin{center}
\includegraphics[width=3in]{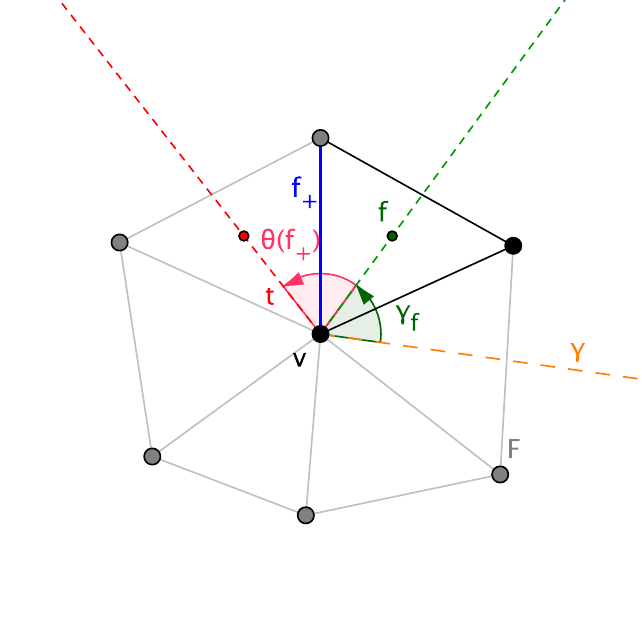}
\caption{Construction of the connection $u_v$}
\label{f-uvexpl}
\end{center}
\end{figure}
It was stated in \cite{DavidEynard2014} that the measure \ref{themeasurez} is therefore the measure of topological gravity studied in \cite{Kontsevich1992}.  As we shall see in the next section, this is a local result, but it cannot be extended globally.

\section{Relation with Weil-Petersson Metric}
\label{sRelWeilPet}
\subsection{Discontinuities of the Chern Classes}
\label{ssDiscont}
There is a subtle point in establishing the link between the angle measure on Delaunay triangulations (let us denote this measure $\mu_\mathscr{Del.}$)  and the measure of the topological Witten-Kontsevich intersection theory, that we shall denote $\mu_{\mathrm{top.}}$
In its general definition through \ref{mupsiN}
\begin{equation}
\label{ }
\mu_{\mathrm{top.}}\propto \psi^N\ ,\quad \psi=\sum_v \psi_v= du
\ ,\quad
u=\sum_v u_v
\end{equation}
the curvature 2-form $\psi$ and the 1-form $u$ (the global U(1) connection) depend implicitely on a choice of triangulation $T$ of the marked sphere, which is supposed to be kept fixed, but the final measure $\mu$ and its integral over the moduli space does not depend on the choice of triangulation).

In our formulation, the moduli space $\mathcal{M}_{0,N+3}$ is the closure of the union of disjoint domains $\mathcal{M}^{(T)}$ where the triangulation $T$ is combinatorially a given Delaunay triangulation. Two domains  $\mathcal{M}^{(T)}$ and $\mathcal{M}^{(T')}$ meet along a face (of codimension 1) where the four vertices of two  faces sharing an edge are cocyclic, so that one passes from $T$ to $T'$ by a flip, as depicted on Fig.~\ref{flipangles}. 
The relation $\mu_\mathscr{Del.}=\mu_\mathrm{top.}$ will be valid if the form $u$ is continuous along a flip. If this is not the case, there might be some additional boundary terms in $du$.

\begin{figure}[h]
\begin{center}
\includegraphics[width=3in, trim=.45in .5in .5in .5in]{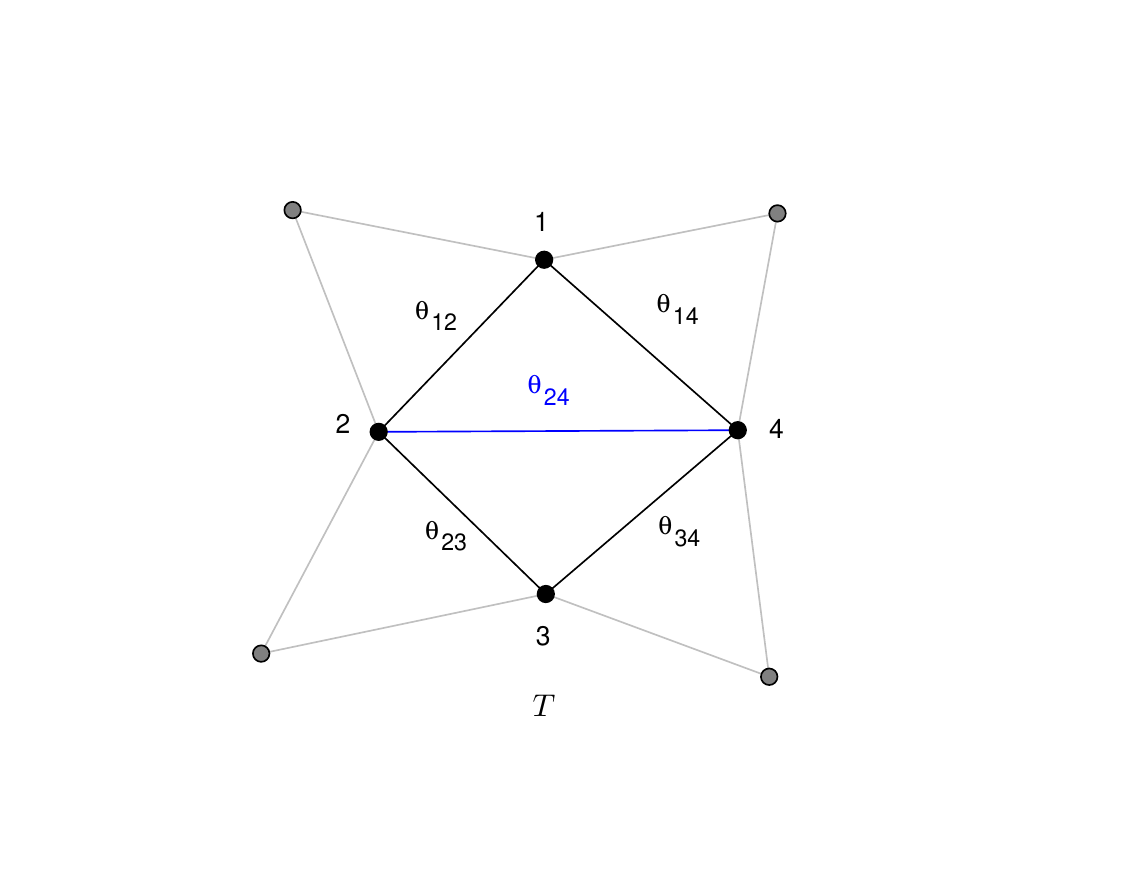}\includegraphics[width=3in, trim=.45in .5in .5in .5in]{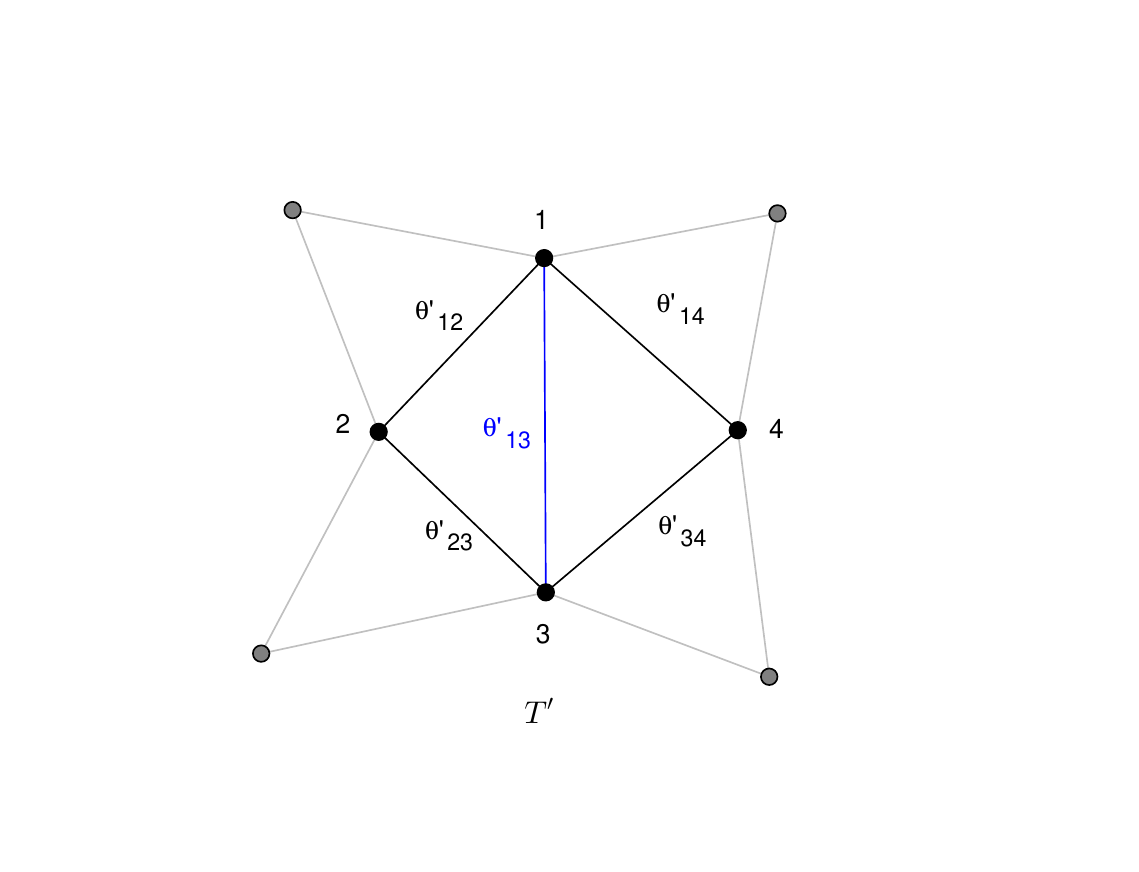}
\caption{ }
\label{flipangles}
\end{center}
\end{figure}
Let us therefore compare the 2-form $u$ for a triangulation $T$ and the corresponding 2-form $u'$ for the triangulation $T'$ obtained from $T$ by the flip $(2,4)\to(1,3)$ depicted on Fig.~\ref{flipangles}. The angles $\theta$ of the edges of $T$ and $\theta'$ of the edges of $T'$ are a priori different for the five edges depicted here (when the points 1, 2, 3 an 4 are not cocyclic) but only six among the ten angles are independent, since they satisfy the relation at vertex 1
\begin{equation}
\label{ }
\theta_{12}+\theta_{14}=\theta'_{12}+\theta'_{13}+\theta'_{14}
\end{equation}
and the three similar relations for vertex 2, 3 and 4. These relations imply for instance that $\theta_{24}+\theta'_{13}=0$.
From the definition \ref{uvexpl} of the 1-forms $u$ and $u'$ one computes easily $u-u'$ (which depends a priori on the choice of section angles $\gamma_1,\,\ldots\,\gamma_4$). 
However we are interested at the difference at the flip between the Delaunay triangulations $T$ and $T'$, i.e when the 4 points are cocyclic. Then $\theta_{12}=\theta'_{12}$, $\theta_{14}=\theta'_{14}$, $\theta_{23}=\theta'_{23}$, $\theta_{34}=\theta'_{34}$ and $\theta_{24}=\theta_{13}=0$ and we get\begin{equation}
\label{u-uprime}
\left.u-u'\right|_\mathrm{flip}=(\theta_{14}+\theta_{23}-\theta_{12}-\theta_{34})(d\theta_{12}-d\theta'_{12}) + (\theta_{14}+\theta_{23}) d\theta_{24}
\end{equation}
This is clearly non zero.
 Despite of the apparent dihedral symmetry breaking of formula \ref{u-uprime}, it is actually symmetric, and using the relations holding at the vertices, it is equivalent to:
\bea 
    \label{}
    \left.u-u'\right|_\mathrm{flip}&=&\frac{1}{2}\left[(\theta_{12}+\theta_{34})(d\theta'_{12}-d\theta_{12}+d\theta'_{34}-d\theta_{34})\right.\cr
    && \left. + (\theta_{14}+\theta_{23})(d\theta'_{14}-d\theta_{14}+d\theta'_{23}-d\theta_{23})\right]
\eea
The 1-form of topological Witten-Kontsevich intersection theory $u_{\scriptscriptstyle\mathrm{top.}}$ is defined globally as a sum over the triangulations T  as
\begin{equation}
\label{ }
u_{\scriptscriptstyle\mathrm{top.}}=\sum_T \chi_{\scriptscriptstyle (T)} u_{\scriptscriptstyle (T)}
\end{equation} 
where $\chi_{\scriptscriptstyle (T)}$ is the indicator function (hence a 0-form) of the domain 
${\mathcal{M}}^{\scriptscriptstyle (T)}$
and $u_{\scriptscriptstyle (T)}$ the 1-form for the triangulation $T$. 
The 2-form (Chern class) of the topological gravity theory  of \cite{Witten:1990} and \cite{Kontsevich1992}
is therefore
\begin{equation}
\label{psitopexpl}
\psi_{\scriptscriptstyle\mathrm{top.}}= d u_{\scriptscriptstyle\mathrm{top.}} 
=\sum_T\  \chi_{\scriptscriptstyle (T)}\, du_{\scriptscriptstyle (T)}+ 
d\chi_{\scriptscriptstyle (T)}\wedge u_{\scriptscriptstyle (T)}
\end{equation}
In \cite{DavidEynard2014} it was shown that the Delaunay measure can be written locally (inside each $\mathcal{M}^T$) as the volume form of the Delaunay 2-form
\begin{equation}
\label{psiDelauGlob}
 \psi_{\scriptscriptstyle\mathscr{Del.}}= \sum_T \chi_{\scriptscriptstyle (T)} du_{\scriptscriptstyle (T)}
\end{equation}
and that this measure is continuous at the boundary between two adjacents domains $\mathcal{M}^T$ and $\mathcal{M}^{T'}$, so that the definition \ref{psiDelauGlob} is global.

The calculation leading to \ref{u-uprime} shows that the 1-form $u$ is generically not continuous at the boundary between domains, so that the second term in \ref{psitopexpl} is generically non-zero, and localized at the boundaries between domains.
Therefore the Delaunay 2-form $\psi_{\scriptscriptstyle\mathscr{Del.}}$ is different from the topological 2-form $\psi_{\scriptscriptstyle\mathrm{top.}}$ (the Chern class) and the associated measures (top forms) are a priori different
\begin{equation}
\label{ }
\mu_{\scriptscriptstyle\mathrm{top.}}\,=\, c_N\ (\psi_{\scriptscriptstyle\mathrm{top.}})^N
\ \neq\ 
\mu_{\scriptscriptstyle\mathscr{Del.}}\,=\,c_N\ (\psi_{\scriptscriptstyle\mathscr{Del.}})^N
\end{equation}
the difference being localized at the boundaries of the domains $\mathcal{\overline M}^{\scriptscriptstyle (T)}_{0,N+3}$.

\subsection{The angle measure and the Weil-Petersson metric}
\label{ssWPmetric}
\subsubsection{The Delaunay K\"ahler form}
For a given Delaunay triangulation $\tilde T$, the Delaunay K\"ahler metric form is
\begin{equation}
\label{ }
G_\mathscr{D.}(\{z\})= dz_u d\bar z_v \, D_{u\bar v}(\{z\})
\end{equation}
with $D_{u\bar v}(\{z\})$ given by \ref{Duvbar}.
The associated Delaunay K\"ahler 2-form is
\begin{equation}
\label{ }
\Omega_\mathscr{D.}(\{z\})= {\frac{1}{2 \mathrm{i}}}dz_u \wedge d\bar z_v \, D_{u\bar v}(\{z\})
\end{equation}
$G_\mathscr{D.}$ and $\Omega_\mathscr{D.}$ are continuous across flips.
From \cite{DavidEynard2014} the matrix $D$ is
\begin{equation}
\label{ }
D=\frac{1}{4\mathrm{i}} A E A^\dagger
\end{equation}
with $A$ the $(N+3)\times 3(N+1)$ vertex-edge matrix 
\begin{equation}
\label{AueDef}
A_{ue}=\begin{cases}
      \frac{1}{z_u-z_{u'}}& \text{if $u$ is an end point of the edge $e=(u,u')$ of $T$}, \\
      \ \ 0& \text{otherwise}.
\end{cases}
\end{equation}
and $E$ the $3(N+1)\times 3(N+1)$ antisymmetric matrix
\begin{equation}
\label{Eee'Def}
E_{e e'}=\begin{cases}
   +1   & \text{ if $e$ and $e'$ consecutives edges of a face $f$, in c.w. order}, \\
   -1   & \text{ if $e$ and $e'$ consecutives edges of a face $f$, in c.c.w. order}, \\
   \hphantom{+} 0  & \text{otherwise}.
\end{cases}
\end{equation}
Then, the 2-form $\Omega_\mathscr{D.}$ takes a simple form, as a sum over faces (triangles) $f$ of $T$. 
Let us denote $(f_1,f_2,f_3)$ the vertices of a triangle $f$, in c.c.w. order (this is defined up to a cyclic permutation of the 3 vertices).
\begin{equation}
\label{OmegaSumf}
\Omega_\mathscr{D.}(\{z\})=\sum_{\mathrm{faces}\,f} \omega_{\scriptscriptstyle\mathscr{D.}}(z_{f_1},z_{f_2},z_{f_3})
\end{equation}with, for a face $f$ with vertices labelled $(1,2,3)$ (for simplicity), and denoting $z_{ij}=z_j-z_i$
\begin{equation}
\label{omegadef}
\omega_{\scriptscriptstyle\mathscr{D.}}(z_1,z_2,z_3)=\frac{1}{ 8}
\left(\begin{array}{c}
       \hphantom{+\,}d\,\log(z_{23})\wedge d\,\log(\bar z_{31})+ d\,\log(\bar z_{23})\wedge d\,\log(z_{31})\\
       +\, d\,\log(z_{31})\wedge d\,\log(\bar z_{12})+ d\,\log(\bar z_{31})\wedge d\,\log(z_{12})\\
       +\, d\,\log(z_{12})\wedge d\,\log(\bar z_{23})+ d\,\log(\bar z_{12})\wedge d\,\log(z_{23})
\end{array}\right)
\end{equation}
Reexpressed in term of the log of the modulus and of the argument of the $z_{ij}$'s 
\begin{equation}
\label{ }
\lambda_{ij}=\log(|z_j-z_i)|\ ,\quad \vartheta_{ij}=\arg(z_j-z_i)
\end{equation}
we obtain
\begin{equation}
\label{ }
\omega_{\scriptscriptstyle\mathscr{D.}}=\omega_{\mathrm{length}}+\omega_{\mathrm{angle}}
\end{equation}
with the length contribution
\begin{equation}
\label{ }
\omega_{\mathrm{length}}=\frac{1}{4}(d\,\lambda_{12}\wedge d\,\lambda_{23}+d\,\lambda_{23}\wedge d\,\lambda_{31}+d\,\lambda_{31}\wedge d\,\lambda_{12})
\end{equation}
and the angle contribution
\begin{equation}
\label{ }
\omega_{\mathrm{angle}}=\frac{1}{ 4}(d\,\vartheta_{12}\wedge d\,\vartheta_{23}+d\,\vartheta_{23}\wedge d\,\vartheta_{31}+d\,\vartheta_{31}\wedge d\,\vartheta_{12})
\end{equation}
Reexpressed in terms of the angles $\alpha_1$, $\alpha_2$ and $\alpha_3$ of the triangle $(1,2,3)$ (using $\alpha_1=\vartheta_{13}-\vartheta_{12}$, etc.), and using $\alpha_1+\alpha_2+\alpha_3=\pi$, one has
\begin{equation}
\label{ }
\omega_{\mathrm{angle}}=\frac{1}{4}(d\,\alpha_{1}\wedge d\,\alpha_{2})
=\frac{1}{4} (d\,\alpha_{2}\wedge d\,\alpha_{3})=\frac{1}{4} (d\,\alpha_{3}\wedge d\,\alpha_{1})
\end{equation} 
Using the triangle relation
\begin{equation}
\label{ }
{\sin(\alpha_1)\over \exp(\lambda_{23})}={\sin(\alpha_2)\over \exp(\lambda_{31})}={\sin(\alpha_3)\over \exp(\lambda_{12})}
\end{equation}
one gets
\begin{align}
\label{ }
d\alpha_1 \cot\alpha_1-d\lambda_{23}=&d\alpha_2 \cot\alpha_2-d\lambda_{31}=d\alpha_3 \cot\alpha_3-d\lambda_{12}\nonumber\\
&=(d\alpha_1+d\alpha_2){\cot\alpha_1\,\cot\alpha_2-1\over\cot\alpha_1+\cot\alpha_2}-d\lambda_{12}
\end{align}
which gives
\begin{align}
\label{}
  d\alpha_1  & = {\csc^2\alpha_2\over \cot\alpha_1+\cot\alpha_2}(d\lambda_{23}-d\lambda_{12}) +
  {\cot\alpha_1\,\cot\alpha_2-1\over\cot\alpha_1+\cot\alpha_2}(d\lambda_{31}-d\lambda_{12}) \\
  d\alpha_2  &  = {\cot\alpha_1\,\cot\alpha_2-1\over\cot\alpha_1+\cot\alpha_2}
  (d\lambda_{23}-d\lambda_{12}) + {\csc^2\alpha_1\over \cot\alpha_1+\cot\alpha_2}(d\lambda_{31}-d\lambda_{12})
\end{align}
which implies
\begin{equation}
\label{ }
d\,\alpha_{1}\wedge d\,\alpha_{2}=d\,\lambda_{12}\wedge d\,\lambda_{23}+d\,\lambda_{23}\wedge d\,\lambda_{31}+d\,\lambda_{31}\wedge d\,\lambda_{12}
\end{equation}
Hence $\omega_{\mathrm{angle}}=\omega_{\mathrm{length}}$.
Therefore one has 
\begin{equation}
\label{wDelFinal}
\omega_{\scriptscriptstyle\mathscr{D.}}={1\over 2}(d\,\lambda_{12}\wedge d\,\lambda_{23}+d\,\lambda_{23}\wedge d\,\lambda_{31}+d\,\lambda_{31}\wedge d\,\lambda_{12})
\end{equation}
\begin{figure}[h]
\begin{center}
\includegraphics[width=2.5in]{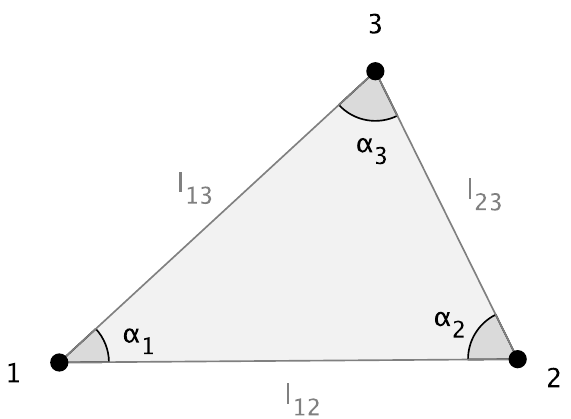}\qquad
\raisebox{4.ex}{\includegraphics[width=2.5in]{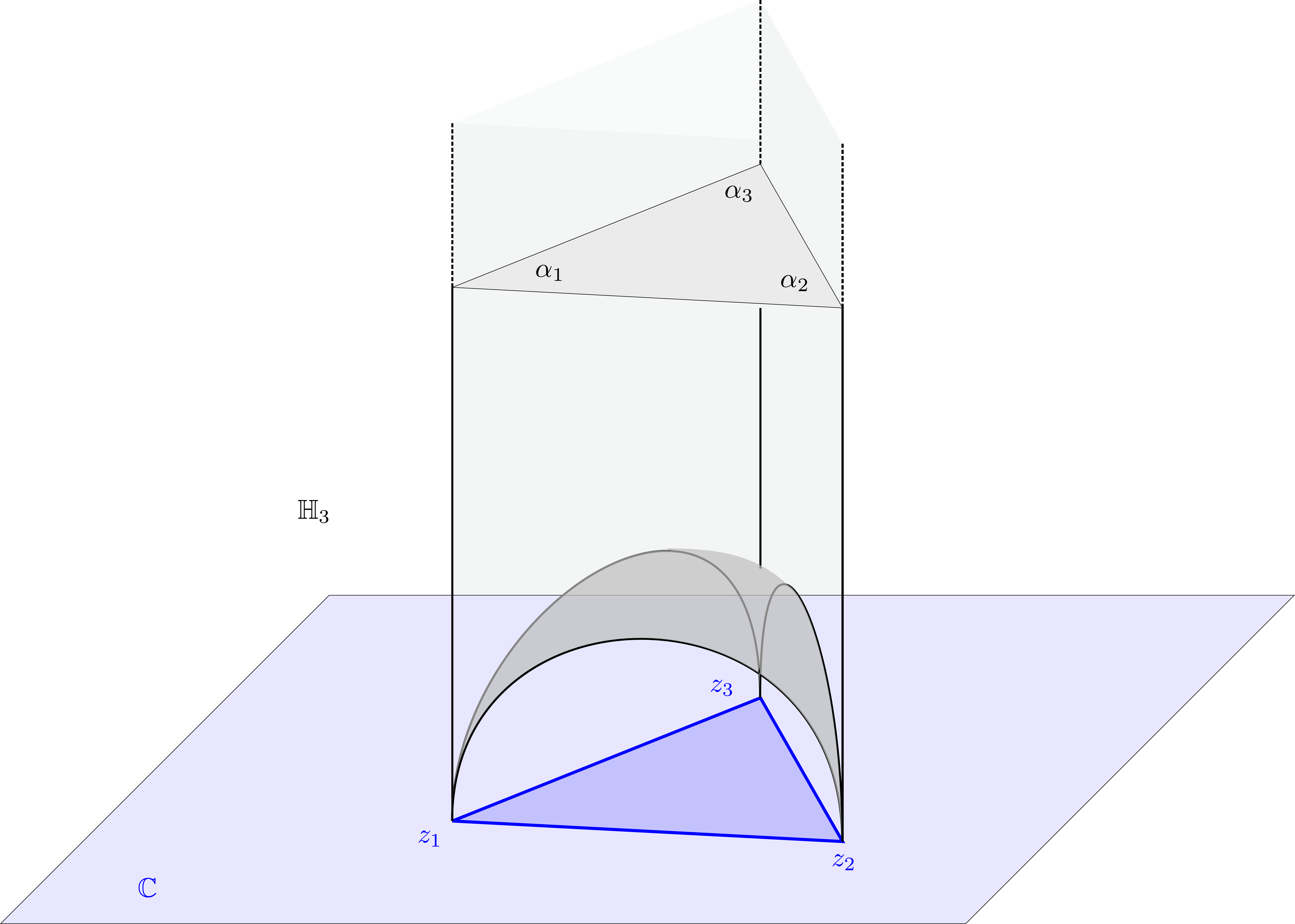}}
\caption{A triangle $f=(1,2,3)$ (left) and the associated ideal spherical triangle $\mathcal{S}_{123}$ in $\mathbb{H}_3$ (right).}
\label{ }
\end{center}
\end{figure}

\subsubsection{Delaunay triangulations and moduli space}

We can now compare this structure with the Weil-Petersson K\"ahler structure on the decorated moduli space  $\mathcal{\tilde M}_{0,N+3}$ of the punctured sphere, decorated by horocycles. We refer to \cite{Penner1987}  and to \cite{Penner2006}, \cite{Thurston2012}, \cite{HoTT_vol_1} (among many references) for a general introduction to the subject.

To any Delaunay triangulation $\tilde T$ with $N+3$ points on the complex plane, we can associate an explicit surface $\mathcal{S}$ with constant negative curvature  and $N+3$ punctures as follows.
Let $\mathbb{H}_3=\mathbb{C}\times \mathbb{R}^{^*}_+$ be the upper half-space above $\mathbb{C}$, with coordinates $(z,h)$ embodied with the Poincar\'e metric $ds^2= (dzd\bar z+ dh^2)/h^2$. It makes $\mathbb{H}_3$ the 3-dimensional hyperbolic space, with $\mathbb{C}\cup\{\infty\}$ its asymptotic boundary at infinity.

Consider a triangle $f_{123}$ with vertices $(1,2,3)$ (in c.c.w. order) with complex coordinates $(z_1,z_2,z_3)$ in $\mathbb{C}$. 
Let $\mathcal{B}_{123}$ be the hemisphere in $\mathbb{H}_3$ whose center is the center of the circumcircle of $f_{123}$ (in $\mathbb{C}$), and which contains the points $(1,2,3)$.
$\mathcal{B}_{123}$, embodied with the restriction of the Poincar\'e metric $ds^2$ of $\mathbb{H}_3$, is isometric to the 2 dimensional hyperbolic disk $\mathbb{H}_2$.
Let $\mathcal{L}_{12}$ be the intersection of $\mathcal{B}_{123}$ with the half plane orthogonal to $\mathbb{C}$ which contains the points $1$ and $2$, this is a semicircle orthogonal to $\mathbb{C}$. With a similar definition for $(23)$ and $(31)$, the semicircles $\mathcal{L}_{12}$, $\mathcal{L}_{23}$ and $\mathcal{L}_{31}$ delimit a 
spherical triangle $\mathcal{S}_{123}$ on the hemisphere in $\mathbb{H}_3$. 
The semicircles $\mathcal{L}_{12}$, $\mathcal{L}_{23}$ and $\mathcal{L}_{31}$ are geodesics in $\mathbb{H}_3$, hence in $\mathcal{B}_{123}$, so that $\mathcal{S}_{123}$ is an ideal triangle in $\mathbb{H}_2$.
$\mathcal{S}_{123}$ is nothing but the face $(123)$ of the ideal tetraedra $(z_1,z_2,z_3,\infty)$ in $\mathbb{H}_3$ whose volume $\mathbf{V}(f)$ appears in \ref{AsumV}.

Now consider a Delaunay triangulation $\tilde T$ in the plane, with $N+3$ points, and with one point at infinity for simplicity.
The union of the ideal spherical triangles $\mathcal{S}_f$ associated to the faces $f$ of $\tilde T$ form surface $\mathcal{S}$ in $\mathbb{H}_3$
\begin{equation}
\label{ }
\mathcal{S}=\bigcup_{f\in\mathcal{F}(T)}\mathcal{S}_f
\end{equation}
See fig.~\ref{surface-triangles}.
The surface $\mathcal{S}$ embodied with the  restriction of the Poincar\'e metric of $\mathbb{H}_3$, is a constant negative curvature surface. Indeed since the triangles $\mathcal{S}_f$ are glued along geodesics, no curvature is localized along the edges of these triangles. It is easy to see that the endpoints $z_i$ of the triangulations are puncture curvature singularities of $\mathcal{S}$, i.e. points where the metric can be written (in local conformal coordinates with the puncture at the origin) 
\begin{equation}
\label{ }
ds^2= {dwd\bar w\over |w|^2 |\log(1/|w|)|^2}
\end{equation}
\begin{figure}[h]
\begin{center}
\includegraphics[width=12.cm]{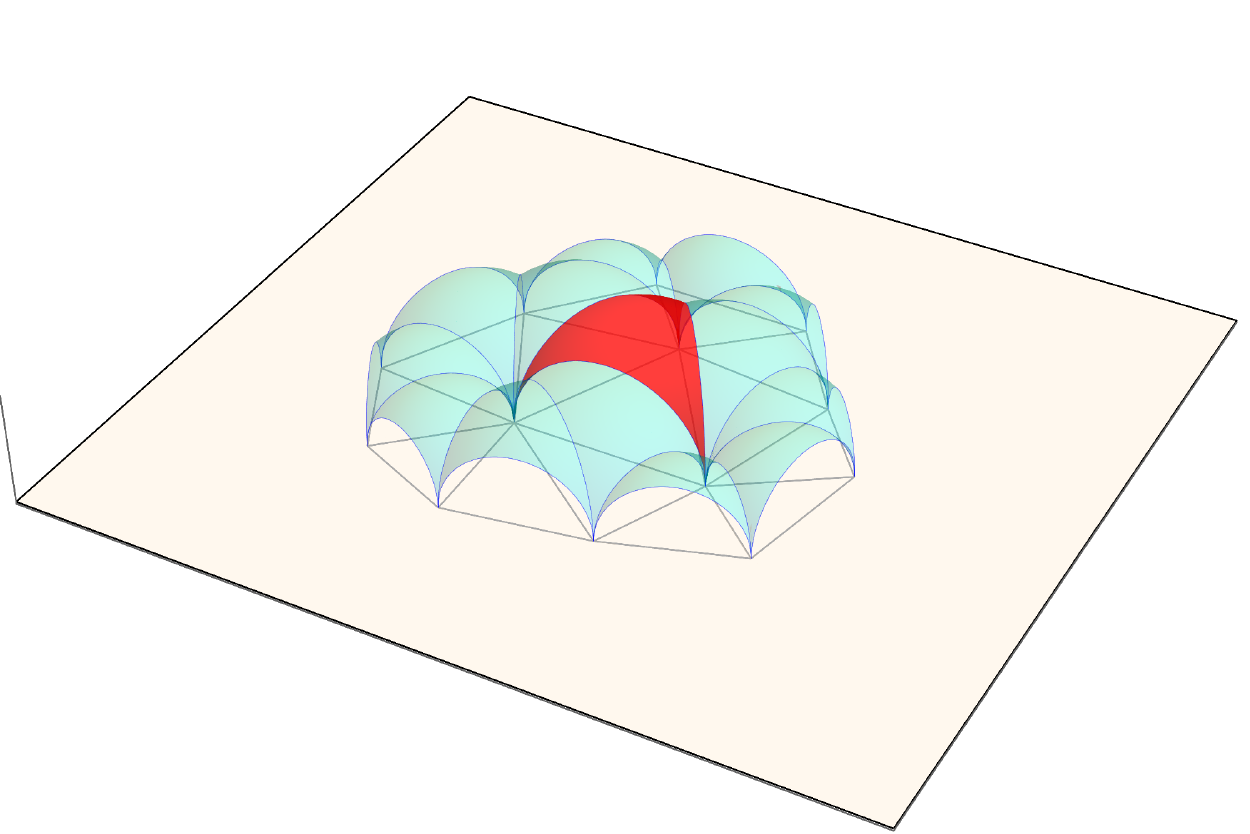}
\caption{A triangulation and the associated punctured surface}
\label{surface-triangles}
\end{center}
\end{figure}

Through the orthogonal projection from $\mathcal{S}$ to the plane $\mathbb{C}$, the metric in each $\mathcal{S}_f$ become the standard Beltrami-Cayley-Klein hyperbolic metric in the triangle $f$. 
We recall that it is defined in the unit disk $\mathbb{D}_2=\{z; |z|<1\}$ in radial coordinates as
\begin{equation}
\label{ }
ds^2_{\mathscr{B.}}={dr^2 + r^2 d\theta^2 \over (1-r^2)}+{(r\,dr)^2\over (1-r^2)^2}
\end{equation}
that it is not conformal, and is such that geodesics are straight lines in the disk.
Thus, each Delaunay triangulation -- modulo PSL(2,$\mathbb{C}$) tranformations -- gives explicitly,  the constant curvature surface representative of a point in $\mathcal{M}_{0,N+3}$.

\subsubsection{$\lambda$-lengths and horospheres}
Following \cite{Penner1987}, decorated surfaces are obtained by supplementing each puncture  $v$ by a horocycle $\mathfrak{h}_v$, i.e. a closed curve orthogonal to the geodesics emanating from $v$ (in the constant curvature metric). Horocycles are uniquely characterized by their length $\ell_v$. The moduli space of decorated punctured surfaces is simply
\begin{equation}
\label{ }
\mathcal{\tilde M}_{g,n}=\mathcal{M}_{g,n}\otimes \mathbb{R}_+^{^{\scriptstyle{\otimes_v}}}
\end{equation}
A geodesic triangulation $\mathfrak{T}$ of the abstract surface $\mathcal{S}$ is a triangulation such that the edges are (infinite length) geodesics joining the punctures, and the triangles are c.c.w. oriented (and non-overlapping). 
For a decorated surface $\mathcal{\tilde S}$, for any geodesics $\mathfrak{e}$
joining two punctures $u$ and $v$ (generically one may have $u=v$), its  $\lambda$-length ${\Lambda}_\mathfrak{e}(u,v)$ is defined from the (finite, algebraically defined) geodesic distance $d_\mathfrak{e}(u',v')$ along $\mathfrak{e}$ between the intersections $u'$ and $v'$ of $e$ with the horocycles  $\mathfrak{h}_u$ and $\mathfrak{h}_v$ by
\begin{equation}
\label{ }
\Lambda_\mathfrak{e}(u,v)=\exp(d_\mathfrak{e}(u',v')/2)
\end{equation}
For a given triangulation $\mathfrak{T}$ (of a genus $g$ surface with $n$ punctures), it is known that the set of the independent $\lambda$-lengths $\Lambda_e\in\mathbb{R}_+$ for the $6g+3n-6$ edges of $\mathfrak{T}$ provide a complete set of coordinates for the decorated Teichm\"uller space $\mathcal{\tilde T}_{g,n}$ (the universal cover of $\mathcal{\tilde M}_{g,n}$). This parametrization is independent  of the choice of triangulation, thanks to the Ptolemy's relations between lambda-lengths 
when one passes from a triangulation $\mathfrak{T}$ to another one $\mathfrak{T'}$ through a flip similar to the ones of fig.~\ref{flipangles}, namely 
\begin{equation}
\label{Ptolemy}
\Lambda_{13}\Lambda_{24}=\Lambda_{12}\Lambda_{34}+\Lambda_{14}\Lambda_{23}
\end{equation}

In this parametrization, the Weil-Petersson 2-form on $\mathcal{M}_{g,n}$ (through its projection from $\mathcal{\tilde T}_{g,n}$) can be written simply as a sum over the $2(2g+n+2)$ oriented faces (triangles) $\mathfrak{f}$ of $\mathfrak{T}$, as
\begin{equation}
\label{WPLambda}
\Omega_{\scriptscriptstyle{\mathscr{W\!\!P}}} = \sum_{\mathrm{faces}\,\mathfrak{f}} d\log(\Lambda_{12})\wedge d\log(\Lambda_{23})+d\log(\Lambda_{23})\wedge d\log(\Lambda_{31})+d\log(\Lambda_{31})\wedge d\log(\Lambda_{12})
\end{equation}
where 
(1,2,3) denote the vertices (punctures) $v_1(f)$, $v_2(f)$ and $v_3(f)$ (here in c.c.w. order) of the geodesic triangle $\mathfrak{f}$ of $\mathfrak{T}$, and the $\Lambda_{ij}$ denote the $\lambda$-length of the edges of the triangle.

\begin{figure}[h]
\begin{center}
\includegraphics[width=12.cm]{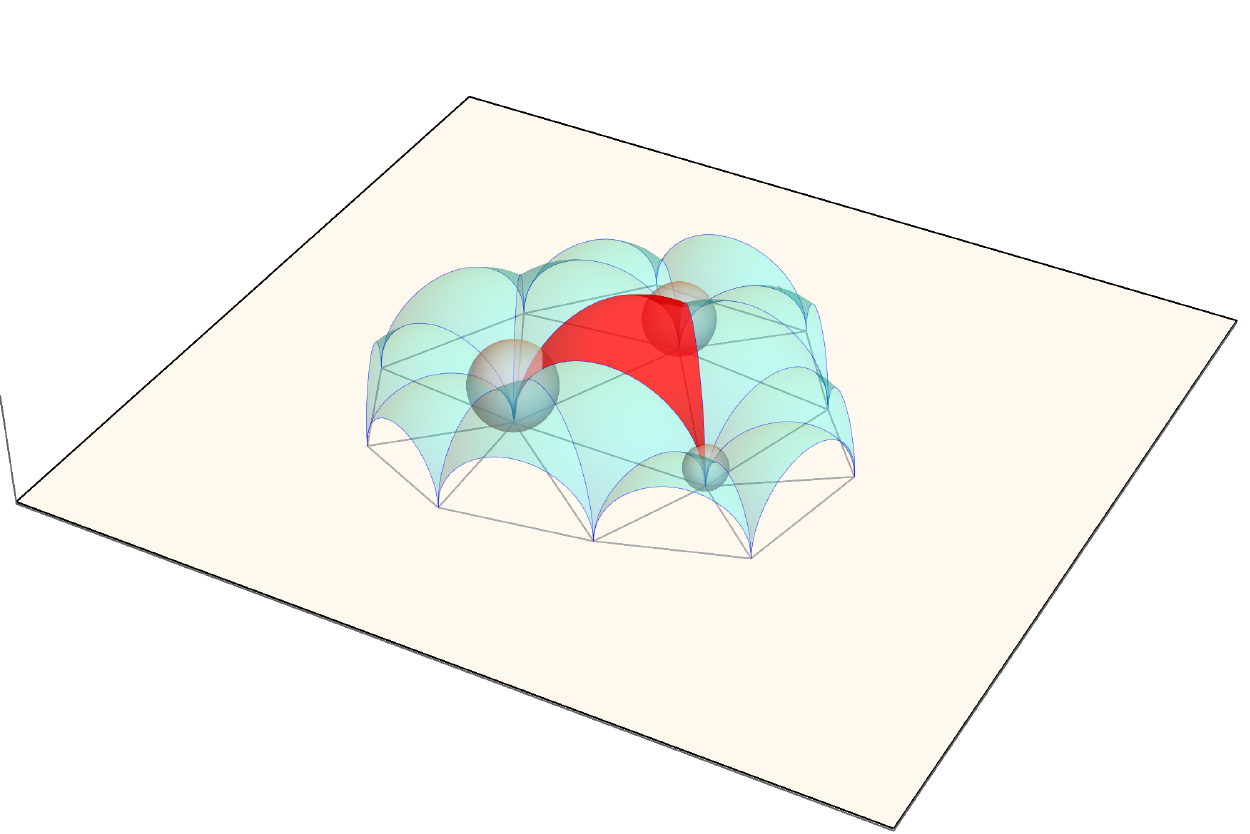}
\caption{The punctured surface decorated with horospheres. Although this 3d representation looks non smooth at the edges, the intrinsic metric of the surface is indeed a smooth constant curvature metric.}
\label{surface+horospheres}
\end{center}
\end{figure}

To compare $\Omega_{\scriptscriptstyle{\mathscr{W\!\!P}}}$ to $\Omega_\mathscr{D.}$, one simply has to look at horocycles and $\lambda$-lengths in Delaunay triangulations. We have an explicit representation of a point in $\mathcal{M}_{0,N+3}$ as the constant curvature surface $\mathcal{S}$ in $\mathbb{H}_3$ constructed above the Delaunay triangulation $T$ for the set of points $\mathbf{z}=\{z_i\}_{i=1,N+3}$ in the complex plane.
Horocycles are easily constructed by decorating each point (puncture) $z_i$ by a horosphere $\mathcal{H}_i$, i.e. an Euclidean sphere in $\mathbb{R}^3$, tangent to the complex place $\mathbb{C}$ at the point $z_i$, and lying above $z_i$ (i.e. in $\mathbb{H}_3$). 
The intersection (in $\mathbb{H}_3$) of the horosphere $\mathcal{H}_i$ with the union of the ideal spherical triangles $\mathcal{S}_f$ for the faces $f$ which share the vertex $i$ defines the horocircle $\mathfrak{h}_i$ associated to the puncture $i$ of $\mathcal{S}$.
See fig.~\ref{surface+horospheres}.

Let $R_i$ denote the Euclidean radius of the horosphere $\mathcal{H}_i$ above vertex $i$. The $\lambda$-length for the edge joining two vertices ($i,j$) of the triangulation is easily calculated (applying for instance the formula in the Poincar\'e half-plane in 2 dimensions) and is 
\begin{equation}
\label{ }
\Lambda(i,j)={|z_i-z_j|\over \sqrt{4\, R_i\, R_j}}
\end{equation}
where $|z_i-z_j|$ is the Euclidean distance between the points $i$ and $j$ in the plane $\mathbb{C}$.
See fig.~\ref{horocycle2D}.
\begin{figure}[h]
\begin{center}
\includegraphics[width=3in]{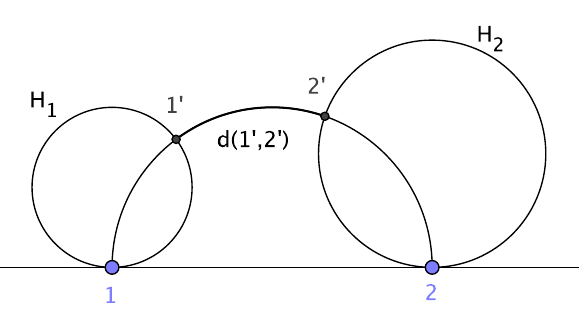}
\caption{The geodesics between the horospheres $H_1$ and $H_2$ at points $1$ and $2$.}
\label{horocycle2D}
\end{center}
\end{figure}

\subsubsection{Identity of the K\"ahler structures}
Incorporating this into \ref{WPLambda}, the  Weil-Petersson 2-form  \ref{WPLambda} the takes a  form similar to that of \ref{wDelFinal} 
\begin{equation}
\label{ }
\Omega_{\scriptscriptstyle{\mathscr{W\!\!P}}} = \sum_{\mathfrak{f}} 
{d\,|z_1-z_2]\over |z_1-z_2|}\wedge {d\,|z_2-z_3]\over |z_2-z_3|}
+
{d\,|z_2-z_3]\over |z_2-z_3|}\wedge {d\,|z_3-z_1]\over |z_3-z_1|}
+
{d\,|z_3-z_1]\over |z_3-z_1|}\wedge {d\,|z_1-z_2]\over |z_1-z_2|}
\end{equation}
On one side \ref{WPLambda} refers to a given geodesic triangulation, but the resulting 2-form $\Omega_{\scriptscriptstyle{\mathscr{W\!\!P}}}$ is known to be independent of the triangulation through the Ptolemy's relation. 
On the other side \ref{wDelFinal} refers to a given Delaunay triangulation, but we know from \cite{DavidEynard2014} that since the matrix $D$ is continuous, $\omega_{\scriptscriptstyle\mathscr{D.}}$ is continuous when one passes from a Delaunay triangulation to another one through flips when four points are cocyclic. Hence we have the global identity
\begin{equation}
\label{Del=WP}
\Omega_{\scriptscriptstyle\mathscr{D.}}={1\over 2} \Omega_{\scriptscriptstyle{\mathscr{W\!\!P}}}
\end{equation} 

We thus have shown that the K\"ahler structure constructed out of the circle pattern associated to random Delaunay triangulations of the sphere is equivalent to the Weil-Petersson K\"ahler structure on the moduli space of the sphere with marked points.
This implies of course that the volume measures are identical (up to a factor $2^{-N}$) and in particular that the total volume of the space of planer Delaunay triangulations with $N+3$ points and the Weil-Petersson volume of the moduli space of the sphere with $N+3$ punctures are proportional.

\subsection{Discussion of the results of \ref{ssWPmetric}}

\subsubsection{Relation between Random Delaunay triangulations and Moduli spaces}

\paragraph{Delaunay and Weil-Petersson:}
Let us firstly mention that a possible relation between the Delaunay triangulation mesure and the Weil-Petersson volume form was already suggested to the authors of\cite{DavidEynard2014} by T. Budd, on the basis of his analytical and numerical calculations of the total volume of the set of Delaunay triangulations for small number of points \cite{Budd2014}, and comparison with the volume of $\overline{\mathcal{M}}_{0,n}$ (the Weil-Petersson volume of moduli space of the punctured sphere. 
T. Budd's suggestion that the total volumes were identical remained very puzzling to us (and somehow paradoxical), in view of the local form \ref{mupsiN} obtained in \cite{DavidEynard2014} for the angle measure in terms of first Chern classes $\psi_v$ only, until we became aware of the discontinuity phenomenon discussed in sec.~\ref{ssDiscont}. 
Indeed, according to \cite{Wolpert:1990} the Weil-Petersson local form for the volume form equals the Mumford $\kappa_1$-class, and cannot be expressed only in term of the Chern classes $\psi_v$ given by \ref{psivdef}. We shall come back to the relation and differences between our model and the usual Witten-Konsevitch topological intersection theory based on general $\psi_k$ Chern classes in the subsection \ref{sssIntersection}.

We have shown here that there is in fact no contradiction. Locally the K\"ahler 2-form for our model, and the Weil-Petersson 2 form are proportional. This is a much stronger result.
It implies of course that the total volume of $\mathfrak{D}_{N}$ of our model (volume of Delaunay triangulations with $Ns$ points) and the Weil-Petersson volume of $\overline{\mathcal{M}}_{0,N}$ indeed coincide. 

\paragraph{Higher genus :}
Although this has not been presented in details, let us 
note that it is possible to generalize the random Delaunay model from the planar case (genus $g=0$) to the higher genus $g>0$ case. One simply has to consider triangulations $T$ of genus $g$, with angles variables $\theta_e$ associated to the edges $e$, to use the flat measure on the angles, and to take into account the local and global constraints over the angles (this is the non-trivial part).

Since the identification \ref{Del=WP} between the Delaunay K\"ahler form and the Weil-Petersson form is local, it should also be valid for the $g>0$ case.

\paragraph{Delaunay mapping versus uniformization mapping of $\overline{\mathcal{M}}_{0,n}$ :}
Let us stress that our parametrization of the the moduli space $\overline{\mathcal{M}}_{0,n}$ of the punctured sphere by Delaunay triangulations is different from the parametrization for $\overline{\mathcal{M}}_{0,n}$ used in \cite{TakhZograf:1987}-\cite{Zograf:1993}, although both involve $n$ punctures coordinates $z_i$, ${i=1,n}$ in the complex plane. 
Indeed in our model the resulting constant curvature metric in $\mathbb{C}$ (out of the punctures) for a representative of an element in $\overline{\mathcal{M}}_{0,n}$ is implemented by gluing triangles endoved with the non-conformal Beltrami-Cayley-Klein metric. In \cite{TakhZograf:1987} \cite{Zograf:1993} the constant curvature metric in $\mathbb{C}$ (out of the punctures) is conformal, and is given by a classical solution of Liouville equation. This explains why the K\"ahler prepotential for the Weil-Petersson metric takes a different form in our model (where it is a hyperbolic volume) and in their model (where it is given by a Liouville action evaluated at a classical solution which corresponds to the constant curvature metric.

Let us also recall that in our parametrization, the constant curvature metric is in fact smooth along the triangle edges, since one glues hyperbolic triangles along geodesic borders (see e.g. \cite{Penner2006}).
 
The random Delaunay model remains an interesting model of random two-dimensional geometry since it is an explicit 
model of a global conformal mapping of an abstract (or intrinsic) but continuous two-dimensional geometry model onto the complex plane. 
This mapping through Delaunay triangulations is different, and somehow simpler, than the general mapping provided by the Riemann uniformization theorem, which is usually considered.
Indeed a local modification of the position of one vertex of the triangulation translates in a local modification of the associated K\"ahler form, since the K\"ahler potential $\mathcal{A}_T$ given by \ref{AsumV} is a sum over local terms (the hyperbolic volumes $\mathbf{V}(f)$ of the triangles).
This is not the case for the uniformization mapping, which leads to a global K\"ahler potential (a classical Liouville action).

Therefore the model discussed here should allow to study the local properties of the conformal mapping of a random metric onto the plane. We present new, although preliminary,  local results in the next section \ref{sLocIneq}.

\subsubsection{Relation between the model and $c=0$ pure gravity ($(3,2)$ Liouville)}

This identity between the Weil-Petersson 2-form and our K\"ahler form on the space of random triangulations shows that the random Delaunay triangulation model is equivalent to the abstract topological model based on the Weil-Petersson measure on moduli spaces. This implies that the Random Delaunay Triangulation model is in the universality class of pure two-dimensional gravity (Liouville theory with $\gamma=\sqrt{8/3}$ and $c_\mathrm{matter}=0$). This was conjectured in \cite{DavidEynard2014}, but on heuristic arguments. 
We now discuss precisely the relationship between the random Delaunay triangulations model and $c=0$ pure 2d gravity.

The equality \ref{Del=WP} shows that the measure $d\mu (\mathbf{z})$ on $\mathbb{C}^D$ is actually the Weil-Petersson volume form on $\mathcal{M}_{0,N+3}$. Indeed, we have shown that the measure on Delaunay triangulations having $N+3$ vertices is
\beq
d\mu(\mathbf{z})=2^N \det\left[D_{u\bar{v}}\right]\prod\limits_{v=4}^{N+3}dz_v^2
=\frac{2^N}{N!}\Omega_{\mathscr{D}}^N
= \frac{1}{N!}\Omega_{\mathscr{W\!\!P}}^N.
\eeq
The r.h.s. is precisely the Weil-Petersson volume form on the moduli space ${\mathcal{M}}_{0,N+3}$.
Hence we have for the total volumes
\begin{equation}
\label{ }
\Vol_{\mu}(\widetilde{\mathcal{T}}_{N+3}^f) = \mathrm{Vol}\left(\overline{\mathcal{M}}_{0,N+3}\right)
\end{equation}

The total volumes of the compactified moduli space $\overline{\mathcal{M}}_{g,n}$ of Riemann surfaces of genus $g$ with $n$ punctures are known explicitely. They satisfy recursion relations which have been studied by \cite{Zograf:1993} \cite{KMZ:1996}, \cite{MZ:2000}.
When discussing the relation with $c=0$ pure 2d gravity, it is sufficient to consider the large $n$ behaviour of $\mathrm{Vol}\left(\overline{\mathcal{M}}_{g,n}\right)$. 
It behaves for large $n$ as (see Eq.~B.1 in \cite{MZ:2000})
\begin{equation}
\label{VolMgnExpl}
\mathrm{Vol}\left(\overline{\mathcal{M}}_{g,n}\right)\  =\  {\mathbf{C}}^{^{\scriptstyle{n}} }\, n^{(5g-7)/2} \left(\mathbf{a}_g+ \mathcal{O}(1/n)\right)
\end{equation}
$\mathbf{C}$ is a positive constant independent of $g$, and the $\mathbf{a}_g$ are known positive constants.
Note that in \ref{VolMgnExpl} we omit the $n!$ factor which is present in the explicit result of \cite{MZ:2000}. This $n!$ factor comes from the fact that in \cite{MZ:2000} the $n$ punctures are \emph{labelled}, while in our model the vertices, hence the punctures, are \emph{unlabelled}.
For genus $g=0$ this gives $\mathrm{Vol}\left(\overline{\mathcal{M}}_{0,n}\right) \sim {\mathbf{C}}^{^{\scriptstyle{n}} } n^{-7/2} {\mathbf{a}}_{_{0}}$

In order to characterize the universality class of the random Delaunay triangulation model, we must study the large $N$ limit of the volume of the space of Delaunay triangulations with $N+3$ points, modulo $\mathrm{SL}(2,\mathbb{C})$ transformations, i.e the triangulation spaces $\mathcal{\tilde T}^f_{N+3}$ defined by \ref{defTtilde}. 
The number of vertices $N+3$ is considered as the ``volume'' of the triangulations; by analogy with what is done in 2D gravity and in its random matrix model formulation.

Indeed in continuum 2d quantum gravity, a two dimensional manifold $M$ is embodied with a ``quantum'' fluctuating metric $h$.
Its volume (area) $A=\int d^2z\,\sqrt{h}$ is therefore not fixed.  A ``string tension'' or ``cosmological constant'' $\Lambda$ parameter is associated to the volume $A$, so that the ``partition function'' is defined by the functional integral over metrics
\begin{equation}
\label{ }
\mathcal{Z}(\Lambda)=\int \mathcal{D}[h]\ \exp\left(- \,\Lambda \int d^2z \,\sqrt{h}\right)
\end{equation} 
In the continuum formulation, KPZ scaling is known to imply that for the genus zero case (planar surfaces), $\mathcal{Z}(\Lambda)$ scales with $\Lambda$ as
\begin{equation}
\label{ZofLambda}
\mathcal{Z}(\Lambda) \propto\ {\Lambda}^{2-\gamma_s}
\quad\text{with}\quad \gamma_s=-1/2  \quad\text{the string exponent of pure gravity.}
\end{equation} 
$\gamma_s$ is the string exponent.
In its discretized version (random maps), the partition function is defined by a sum over (for instance) abstract planar triangulations $T$ (with equal length $\ell$ for all edges), and denoting by $|T|$ the number of triangles of $T$, by
\begin{equation}
\label{ZofZsing}
\mathcal{Z}(Z)=\sum_{\text{triangulations}\ T} Z^{|T|}
\end{equation} 
$Z$ is a fugacity associated to the triangles, and we omit the possible symmetry factors.
Combinatorics and random matrix methods show that this partition function is an analytic function of the fugacity $Z$ near the origin, with a first singularity at some critical $Z_c>0$, of the form (in the planar case)
\begin{equation}
\label{ }
\mathcal{Z}(Z)\ =\ \text{regular part}\ +\ \mathbf{A}\,(Z_c-Z)^{5/2}\ +\ \text{subdominant terms}
\end{equation}
This allows the identification between the ``renormalized chemical potential'' $z_0$ for the triangles and the continuum cosmological tension 
\begin{equation}
\label{ }
{(Z_c-Z)\over Z_c}\, =\ \ell^2 z_0 \,=\, \ell^2\Lambda
\end{equation}
The continuum limit where $\ell\to 0$ while $z_0=\Lambda$ is fixed, amounts to consider the limit when $Z\to Z_c$, and to consider only triangulations with large size or area $|T|$. The exponents $5/2$ in \ref{ZofLambda} and \ref{ZofZsing} coincide, hence the identification of continuum $c=0$ pure gravity with the continuum limit of the random maps model.

The construction of a continuum limit from the discrete Delaunay triangulation model is done in exactly the same way.
We consider the grand canonical ensemble of planar Delaunay triangulations $\widetilde{\mathcal{T}}^f=\cup_n\widetilde{\mathcal{T}}^f_{n}$,such that $n$ varies (from $n=3$ to $\infty$), and associate a ``fugacity''$Z$ to each vertex of the triangulations.
The partition function is 
\begin{equation}
\label{ }
\mathcal{Z}_{\mathscr{D}}(Z)=\sum_n Z^n\  \mathrm{Vol}(\widetilde{\mathcal{T}}^f_{n})
\end{equation}
From the explicit results of \cite{Zograf:1993} \cite{KMZ:1996}, \cite{MZ:2000}
this function is analytic in a neighborhood of zero, with its  closest singularity at $Z_c= \mathbf{C}^{-1}$.
From \ref{VolMgnExpl} the singularity is again of the form
\begin{equation}
\label{ }
\mathcal{Z}_{\mathscr{D}}(Z)\ =\ \text{regular part}\ +\ A\,(Z_c-Z)^{5/2}\ +\ \text{subdominant terms}
\end{equation}
with the same exponent $2-\gamma_S=5/2$.
It is this fact that leads to the statement that the continuous limit (large number of vertices) of Delaunay triangulations is a (3,2) minimal model, which has central charge $c=0$, dressed by gravity. 
Let us stress that the Delaunay triangulations are considered as models of random discrete geometry, not as models of continuous geometries anymore.
We shall come back to a more precise discussion of this difference in \ref{AbContLim}.

\subsubsection{Relationship and difference with $c=-2$ topological gravity}
\label{sssIntersection}

Our model can be related to the Witten-Kontsevich intersection theory as follows:
the measure $d\mu$ is expressed in terms of the Weil-Petersson two-form $\Omega_{\mathscr{WP}}$, and which is proportional to the Mumford $\kappa_1$-class:
\beq
\Omega_{\mathscr{WP}}=2\pi^2 \kappa_1.
\eeq
Therefore, the volume of $\widetilde{\mathcal{T}}_{N+3}^f$ measured with $d\mu$ is the Weil-Petersson volume of $\overline{\mathcal{M}}_{0,N+3}$:
\bea
\Vol_{\mu} (\widetilde{\mathcal{T}}_{N+3}^f)
&=&\Vol_{\mathscr{WP}} (\overline{\mathcal{M}}_{0,N+3})=\frac{1}{N!}\int_{\overline{\mathcal{M}}_{0,N+3}}\Omega_{\mathscr{WP}}\cr
&=&\int_{\overline{\mathcal{M}}_{0,N+3}}\frac{(2\pi^2\kappa_1)^N}{N!}\cr
&=&\int_{\overline{\mathcal{M}}_{0,N+3}} e^{2\pi^2 \kappa_1} \cr
&=&\left\langle e^{2\pi^2\kappa_1} \tau_0^{N+3}\right\rangle_{0},
\eea 
where the 3rd equality simply means that, when one performs the expansion of the exponential in series of $\kappa_1$, the only cohomological class which  has the right dimension when integrated over $\overline{\mathcal{M}}_{0,N+3}$ is $\frac{(2\pi^2 \kappa_1)^N}{N!}$, and the last equality is the standard Witten notation for the intersection number of $\kappa_1$ class.\\ 
The Weil-Petersson volume is a special case of the volume $\left\langle \tau_0^{N+3} \ e^{\sum\limits_{k\geq0}\hat{t}_k\kappa_k}\right\rangle_{0}$ with the times $\hat{t}_k=\delta_{1,k}2\pi^2$. The integral of $\kappa$-classes can be computed in terms of intersection numbers of Chern classes $\tau_d$. The formula relating those intersection numbers is the following:
\beq 
\label{kappa-chern}
\left\langle \prod_{i=1}^{N+3} \tau_{d_i} \ e^{\sum\limits_{k=0}^\infty \hat{t}_k \kappa_k} \right\rangle_{0}=\left\langle \prod_{i=1}^{N+3} \tau_{d_i} \ e^{\sum\limits_{k=0}^{\infty}(2k+1)!!t_{2k+3}\tau_{k+1}} \right\rangle_{0},
\eeq 
(in the right hand side, one has to write the Taylor expansion of the exponential, and the number of marked points is read off from the number of $\tau$-factors)
where the times $\hat{t}_k$ and $t_k$ are related as the coefficient of the series in powers of $u$:
\beq 
e^{-\sum\limits_{k\geq0}\hat{t}_k u^{-k}}=1-\sum\limits_{k=0}^{\infty}(2k+1)!!t_{2k+3}u^{-k}.
\eeq
The generating function of all intersection numbers (not only genus 0) is the  KdV Tau function
\beq
\left\langle e^{\sum\limits_{k=1}^{\infty} (2k-1)!! t_{2k+1}\tau_{k}} \right\rangle
= \log\left( \tau_{\text{KdV}}(0,t_3,3t_5,15t_7,\dots,(2k-1)!! t_{2k+1},\dots) \right),
\eeq
In \cite{Gaiotto-Rastelli}, this is interepreted as topological gravity, with the times $t_k$ associated to closed FZZT stable branes. The background $t_k=0$ is the $(2,1)$ theory, which has central charge $c=-2$, and \cite{Gaiotto-Rastelli} mention that other background times are associated to other $(p,q)$ minimal models with central charge $c=1-6(p-q)^2/pq$.
In our case, the background times are 
\beq
t_{2k+3} = \delta_{k,0}-\frac{(-1)^{k} \ (2\pi)^{2k}}{(2k+1)!},
\eeq
which are associated to the $(3,2)$ model with $c=0$, as shown by \cite{KMZ:1996}.

Therefore, for fixed $N$, the volume of the set of Delaunay triangulations of size $N+3$ is deduced from the KdV tau function specialized at the times $t_{2k+3}$, that are not null. This is the difference with $c=-2$ topological Witten-Kontsevich, where the times are closed to $0$, and where the correlation functions are logarithmic derivatives of the KdV tau function evaluated at null times.\\

\section{Local inequalities on the measure}
\label{sLocIneq}

We have shown that the measure $\mu=d\mu(\mathbf{z})$ on $\mathcal{T}_{N}$ can be expressed locally in terms of the Weil-Petersson measure on ${\overline{\mathcal{M}}}_{0,N}$, and that it is necessary to consider its large $N$ limit to make contact with 2d quantum gravity.
One would like to study the convergence of the sequence of random measure on points in the complex plane (a random point process in $\mathbb{C}$, and of its correlations functions (moments and cumulants) defined from the random Delaunay triangulations ensembles. 
It is thus necessary to study the \emph{local properties} in $\mathbb{C}$ of this random measure, not only the properties of the total volume of the set of random measure, as discussed before.
A first step towards convergence results is to obtain local bounds and inequalities between the different measures.
The ultimate goal is to study the continuous $N\to\infty$ limit. Since the concept of ``continuum limit'' may have several different mathematical meanings, let us briefly come back what we mean in this study in \ref{AbContLim}.
We then give two local results satisfied by the measure $\mu(T,d\boldsymbol{\theta})\ =\ d\mu(\mathbf{z})$ defined  in \cite{DavidEynard2014} and studied in this paper.

\subsection{About the continuum limits}
\label{AbContLim}

\paragraph{Random metrics:} One may first consider our model as a model of \emph{random Riemannian metrics} on the plane. Indeed, if one looks at a configuration of distinct points on the Riemann sphere and adds progressively points, we have seen in the previous part that at each step, one can embody each triangle of the Delaunay triangulation with its natural Beltrami hyperbolic metric. This was a useful tool in Section~\ref{sRelWeilPet}. This gives a global metric which is hyperbolic but has puncture singularities at the points. Since the punctures are at infinite distances, and the metric is not compact, there is clearly no hope, even with an ad hoc rescaling of the metric, that the space of triangulations equipped with the Beltrami metric has a limit in the Gromov-Hausdorff sense when $N\to\infty$. 

However, by analogy with the random planar map model, one should rather consider random \emph{discrete metrics spaces} constructed from the random triangulations. 
One may consider for instance the random Voronoï graph $T^\star$, dual to the random Delaunay triangulation $T$, whose vertices are the centers (with coordinates $w_f$) of the circumcircles to the faces (the triangles) $f$ of the triangulation $T$. An edge $e^\star=(w_f,w_{f'})$ of $T^\star$ is dual to an edge of $T$, since it is the straight segment between the centers of two faces $f$ and $f'$ adjacents to an edge $e=(z_{v_1},z_{v_2})$ of $T$ (with the notations of Fig.~\ref{definition_angles}). This dual edge $e^\star$ is in fact a geodesic in the global hyperbolic Beltrami metric, with length
\begin{equation}
\label{ }
l(e^\star)= {1\over 2} \log\left({(1+\sin(\theta_n))(1+\sin(\theta_s))\over (1-\sin(\theta_n))(1-\sin(\theta_s))}\right)
\end{equation}
where $\theta_n=\mathrm{Arg}((w_f-z_{v1})/(z_{v_2}-z_{v_1}))$ is the angle between the vectors $(v_1,v_2)$ and $(v_1,f)$
while $\theta_s=\mathrm{Arg}((z_{v_2}-z_{v_1})/ w_{f'}-z_{v_1}))$ is the angle between the vectors $(v_1,f')$ and $(v_1,v_2)$.
This defines a distance function $d_{T^\star}$ on the Voronoi graph $T^\star$, and this mapping $T^\star \to d_{T^\star}$ is continuous on the space of Delaunay triangulations with $N$ points, since it is continuous when one performs a flip. Indeed an edge $e$ is flipped if $\theta(e)=\theta_n+\theta_s=0$, hence when $l(e^\star)=0$.
A natural conjecture, is that the large $N$ limit makes sense and is in the same universality call of the random planar map model, namely that $(T^\star,N^{-1/4} d_{T^\star})$, considered as a random (discrete) metric space, converge in the Gromov-Hausdorff sense towards the Brownian map as in LeGall \cite{LeGall2013} and Miermont \cite{Miermont2013}.

Note that other choices are expected to have the same convergence properties.
For instance one can choose $l(e^\star)=2 \sin((\theta_n+\theta_s)/2)$. This corresponds to the locally flat ``intrinsic planar triangulations'' case considered in the section~2.7.1 of \cite{DavidEynard2014}, and which is an example of ``critical discrete conformal map'' in the sense of \cite{MercatHBTT}.

\paragraph{Random conformal measures:} Secondly, one may consider our model as a model of \emph{random measures} on the plane. This is our point of view in this work, and this is the one relevant when discussing the relation between the continuum limit of our model, or of random maps, with the Liouville QFT. Indeed in the Liouville theory the Liouville field $\phi_L$ defines by its exponential $\exp(\gamma \phi_L)$  random measures with fascinating multifractal properties linked to multiplicative chaos theory (see for instance \cite{DKRV2015} and references therein), and conformal invariance. It is for instance expected that the moments of the \emph{local density} of points $\rho(z)^\beta$ are related to the local vertex operators $\exp(\alpha \phi_L(z))$ in the Liouville theory.
This point of view is deeply connected to the field of random conformal geometry and conformal stochastic processes (such as SLE, CLE, QLE) and the study of their local properties.
In this direction, one must address already non-trivial questions, many still open, which are (as far as we know) out of the reach of topological theories and TQFT methods.

\subsection{Maximality property over the Delaunay triangulations}
\label{sMaxPropMes}

Looking at the measure $d\mu(\mathbf{z})$ on $\mathbb{C}^D$ (the space of distributions of $N+3$ points on the Riemann sphere), theorem \ref{themeasurez} gives $$d\mu(\mathbf{z})=\prod_{v=4}^{N+3} d^2z_v\ 2^N\ \det\left[{D_{u\bar v}}\right]_{u,v\notin\{ z_1,z_2,z_3\}}.$$
Let $\{z_v\}$ be a configuration of $N+3$ points on the Riemann sphere, and let $T$ be a planar triangulation associated to these points. Then the K\"ahler metric $D_{u\bar v}(T) $ on $\mathbb{C}^{N+3}$ is still well defined ($T$ is not necessarily a Delaunay triangulation), see equations \ref{Duvbar} and \ref{AsumV}. We use a short-hand notation:
\begin{equation}
d_{(ijk)}(T)=\det{\left[ (D_{u \bar v})_{u,v \neq \{ i,j,k\}} (T)\right]}.
\end{equation}
Then the following result holds:
\begin{theorem}
\label{propmaximal}
Given $N+3$ points $z_{1},\ldots,z_{N+3}$ in $\mathbb C$, their Delaunay triangulation $T^{D}(\{z_v\})$ is the one which maximizes $ d_{(ijk)}(T)$ among all possible triangulations $T$:
\begin{equation}
d_{(ijk)}(T^{D}(\{z_v\}))=\max_{T\, {\rm triangulation}\,{\rm of}\, \{ z_{v}\}} d_{(ijk)}(T).
\end{equation}
\end{theorem}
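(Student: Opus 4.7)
The plan is to reduce global maximization to a local inequality at a single edge flip, and then extract that inequality from an explicit factorized formula for the Hessian perturbation. Any two triangulations of a fixed point set $\{z_v\}\subset\mathbb C$ are connected by a finite sequence of edge flips, and by Lawson's algorithm one may always choose a sequence consisting entirely of \emph{Delaunay flips}, each of which resolves one local violation of the empty-circumcircle condition and so moves the triangulation strictly closer to $T^D(\{z_v\})$. It therefore suffices to prove $d_{(ijk)}(T)\le d_{(ijk)}(T')$ whenever $T'$ is obtained from $T$ by one Delaunay flip; iterating along a Lawson sequence yields the full theorem.

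Labelling the four vertices of the flipped quadrilateral $1,2,3,4$, with $T$ carrying diagonal $(1,3)$ and $T'$ carrying the Delaunay diagonal $(2,4)$, only four faces differ between $T$ and $T'$. The Hermitian perturbation
\begin{equation*}
P \;:=\; D(T')-D(T) \;=\; \partial\bar\partial\bigl[\mathbf V(1,2,3)+\mathbf V(1,3,4)-\mathbf V(1,2,4)-\mathbf V(2,3,4)\bigr]
\end{equation*}
is supported only on the indices $\{1,2,3,4\}$. Combining the per-face formula $\omega_{\mathscr D}(f)=\tfrac12\sum_{\mathrm{cyc}} d\lambda_{ij}\wedge d\lambda_{jk}$ of \ref{wDelFinal} over the four faces of the flip, a direct algebraic manipulation gives
\begin{equation*}
\Omega_{\mathscr D}(T')-\Omega_{\mathscr D}(T) \;=\; \tfrac14\, dX\wedge dY,
\end{equation*}
with $X=\log\bigl|[z_1,z_3;z_2,z_4]\bigr|$ and $Y=\log\bigl(|z_1-z_2||z_2-z_3||z_3-z_4||z_1-z_4|/(|z_1-z_3|^2|z_2-z_4|^2)\bigr)$. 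The wall $\{Y=0\}$ is exactly Ptolemy's equality, i.e.\ the cocyclicity locus on which the flip is performed and on which $D(T)=D(T')$ by the continuity of $D$ proved in Section~\ref{ssDiscont}; the sign of $Y$ off the wall encodes which diagonal is the Delaunay one.

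The final step is to upgrade this local identity to the determinant inequality, and here lies the main obstacle. The perturbation $P$ is of rank $2$ but of indefinite signature $(1,1)$, so a naïve positive-semi-definite comparison of K\"ahler forms does \emph{not} suffice. What saves the argument is the combination of (i) continuity on the wall, where $\det D(T)=\det D(T')$ exactly, (ii) the Epstein--Penner/Rivin--Bobenko--Springborn variational principle identifying the Delaunay triangulation as the unique extremum of $\sum_f \mathbf V(f)$ over all triangulations of $\{z_v\}$, and (iii) the explicit low-rank form of $dX\wedge dY$ on only four complex coordinates. Together these let one Taylor-expand $\det D$ transverse to the wall: the first-order variation vanishes (consistent with continuity), and the second-order variation has a fixed sign dictated by the Delaunay criterion, yielding $d_{(ijk)}(T)\le d_{(ijk)}(T')$ and hence the theorem by induction on the Lawson flip sequence.
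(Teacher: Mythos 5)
Your global strategy --- reduce to a Lawson sequence of flips and prove a one-flip inequality --- is exactly the paper's, and you correctly spot the central difficulty: the perturbation $D(T')-D(T)$ is supported on the four flip vertices but has indefinite signature $(1,1)$, so no positive-semi-definiteness argument on the K\"ahler forms can give the determinant inequality. However, your proposed resolution does not close the gap. A Taylor expansion of $\det D$ ``transverse to the wall'' $\{Y=0\}$ only controls signs in an infinitesimal neighbourhood of the cocyclicity locus, whereas each Lawson flip is performed at a configuration where the offending vertex lies \emph{strictly and possibly far} inside the circumcircle; you need an exact inequality valid on the whole open region $Y<0$, not a second-order statement at $Y=0$. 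Moreover, the appeal to the Rivin--Bobenko--Springborn variational principle for $\sum_f \mathbf V(f)$ is a non sequitur here: maximality of the prepotential $\mathcal A_T$ over triangulations says nothing about the determinant of its Hessian, which is the quantity $d_{(ijk)}(T)$ you must compare. As written, the ``second-order variation has a fixed sign'' claim is neither proved nor sufficient.

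The paper's mechanism, which you are missing, is the covariance lemma: $d_{(ijk)}(T)/|\Delta_3(i,j,k)|^2$ is independent of which three points are deleted, so at each flip one may \emph{re-choose} the excluded triple to consist of three of the four quadrilateral vertices, say $(1,2,4)$. Then only the single diagonal entry $D_{3,\bar 3}$ differs between $T$ and $T'$, the determinant difference collapses by cofactor expansion to
$\bigl[D_{3,\bar 3}(T)-D_{3,\bar 3}(T')\bigr]\cdot\det\bigl[(D_{u\bar v})_{u,v\notin\{1,2,3,4\}}\bigr]$,
and a short computation using $D=\tfrac{1}{4\mathrm i}AEA^\dagger$ gives the exact closed form
$D_{3,\bar 3}(T)-D_{3,\bar 3}(T')=\mathrm{Area}(f)\,\bigl(|z_3-\omega_f|^2-R_f^2\bigr)/\bigl(|z_{31}|^2|z_{32}|^2|z_{34}|^2\bigr)$,
whose sign is precisely the in-circle test --- valid globally, not just near the cocyclic wall. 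Combined with positivity of the principal minor, this yields monotonicity along the entire Lawson sequence. If you want to salvage your draft, replace the Taylor-expansion step by this change-of-gauge-plus-cofactor argument; the $dX\wedge dY$ identity you derived is correct but is the wrong object to feed into a determinant comparison.
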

In order to prove this assertion, one may look at the transformation $d_{(124)}(T)\xrightarrow[(13)]{(24)}d_{(124)}(T') $ where the triangulation $T$ undergoes the edge flip $(24)\to (13)$ (see figure \ref{lawson}). It leads to the following lemma:
\begin{lemma} 
\label{areadet}
Denote $f$ the triangle $(124)$, and $\omega_f$, $R_f$ respectively the center and the radius of its circumcircle. Then 
\begin{equation}
d_{(124)}(T)-d_{(124)}(T')=\det{\left[ (D_{u \bar v})_{u,v \neq \{ 1,2,3,4\} } (T)\right]}\times{\rm Area}(f) \frac{\left| z_{3}-\omega_{f}\right|^{2}-{R_{f}}^{2}}{\left|z_3-z_1\right|^{2}\left|z_3 - z_2\right|^{2}\left|z_3 - z_4\right|^{2}}.
\end{equation}
\end{lemma}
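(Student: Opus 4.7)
The plan is to reduce the determinant difference to the change of a single diagonal entry times a common minor, and then identify that change with the geometric right-hand side.

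First, observe that under the flip $(24)\to(13)$ the two faces $(124)$ and $(234)$ of $T$ are replaced by $(123)$ and $(134)$ in $T'$, while all other faces are common to $T$ and $T'$. Since $\mathcal{A}_T=-\sum_f\mathbf{V}(f)$ is face-additive and each $\mathbf{V}(f)$ depends only on the three vertices of $f$, the difference $\mathcal{A}_T-\mathcal{A}_{T'}$ depends only on $(z_1,z_2,z_3,z_4)$. Consequently $D_{u\bar v}(T)-D_{u\bar v}(T')$ vanishes unless $u,v\in\{1,2,3,4\}$. Within the matrix $(D_{u\bar v})_{u,v\notin\{1,2,4\}}$ only the index $3$ from this set survives, so the restricted matrices for $T$ and $T'$ differ solely at the $(3,3)$-entry. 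Multilinearity of the determinant in row $3$ (equivalently, Laplace expansion along that row) then yields
\begin{equation*}
d_{(124)}(T)-d_{(124)}(T') \;=\; \bigl(D_{3\bar 3}(T)-D_{3\bar 3}(T')\bigr)\,\det\!\bigl[(D_{u\bar v})_{u,v\notin\{1,2,3,4\}}\bigr].
\end{equation*}

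Second, compute the entry difference using the face decomposition $\Omega_{\mathscr{D.}}=\sum_f\omega_{\mathscr{D.}}(f)$ of Section~\ref{ssWPmetric}. Reading off the coefficient of $dz_3\wedge d\bar z_3$ in (\ref{omegadef}) for a c.c.w. face with vertices $\{3,a,b\}$, one finds its contribution to $D_{3\bar 3}$ equals ${\rm Area}(f)/(|z_3-z_a|^2|z_3-z_b|^2)$; the numerator arises from the identity $z_{ab}\bar z_{b3}-z_{b3}\bar z_{ab}=-4\mathrm{i}\,{\rm Area}(f)$. Only the three flipped faces contribute to the difference, which after putting the three terms over a common denominator gives
\begin{equation*}
D_{3\bar 3}(T)-D_{3\bar 3}(T') \;=\; \frac{{\rm Area}(234)\,|z_{13}|^2-{\rm Area}(123)\,|z_{34}|^2-{\rm Area}(134)\,|z_{23}|^2}{|z_{13}|^2|z_{23}|^2|z_{34}|^2}.
\end{equation*}

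The crux of the proof is then the purely geometric power-of-a-point identity
\begin{equation*}
{\rm Area}(234)\,|z_{13}|^2-{\rm Area}(123)\,|z_{34}|^2-{\rm Area}(134)\,|z_{23}|^2 \;=\; {\rm Area}(124)\bigl(|z_3-\omega_f|^2-R_f^2\bigr).
\end{equation*}
To prove it, write the power of $z_3$ with respect to the circumcircle of $(1,2,4)$ as the classical ratio of a $4\times 4$ determinant with columns $(|z|^2,z,\bar z,1)$ by the $3\times 3$ determinant whose value is $-4\mathrm{i}\,{\rm Area}(124)$. Subtracting the $z_3$-row from the other three rows and then using column operations to eliminate the remaining linear $z_3,\bar z_3$ dependence in the first column collapses the $4\times 4$ determinant into a $3\times 3$ determinant with rows $(|w_i|^2,w_i,\bar w_i)$, where $w_i:=z_i-z_3$. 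Expanding along the first column and rewriting each $2\times 2$ minor $w_i\bar w_j-w_j\bar w_i=-4\mathrm{i}\,A(3,i,j)$ as $-4\mathrm{i}$ times the signed area of the triangle with c.c.w. ordered vertices $(3,i,j)$ produces, after tracking cyclic signs, exactly the required combination. Assembling the three steps yields the lemma. The main obstacle is the last algebraic identification, but it reduces to a transparent determinant calculation.
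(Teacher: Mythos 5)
Your proof is correct and follows essentially the same route as the paper's: both localize the change to the single entry $D_{3\bar 3}$ (since $\mathcal{A}_T-\mathcal{A}_{T'}$ depends only on $z_1,z_2,z_3,z_4$ and the indices $1,2,4$ are deleted), factor out the common minor by expanding along row $3$, and then identify the entry difference with $\mathrm{Area}(124)$ times the power of $z_3$ with respect to the circumcircle of $(124)$. The only divergence is in the last step and is minor: the paper recognizes the numerator as the circumcircle's defining anti-Hermitian quadratic by noting it vanishes at $z_1,z_2,z_4$ and reading off its leading coefficient as the area, whereas you reorganize it as the signed-area combination $\mathrm{Area}(234)\,|z_{13}|^2-\mathrm{Area}(123)\,|z_{34}|^2-\mathrm{Area}(134)\,|z_{23}|^2$ and invoke the classical determinant formula for the power of a point --- both identifications are valid and yield the same result.
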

\begin{proof}
The proof of this lemma is given in appendix \ref{appA}. 
\end{proof}
\begin{remark}
Let us recall that in \cite{DavidEynard2014}, for a triangulation where all the faces are positively oriented, it has been proved that the Hermitian form $D_{u\bar v}$ is positive. The result is true for general planar triangulations, if we impose a positive orientation for the faces (orientation that we enforce here). Hence, the principal minors of $D_{u\bar v}$ are positive, so $ \det{\left[ (D_{u \bar v})_{u,v \neq \{ 1,2,3,4\} } (T)\right]}\geq 0$.
\end{remark}
\begin{remark}
From lemma \ref{areadet}, one deduces that $d_{(124)}(T)-d_{(124)}(T')\geq0$ only if $z_3$ is out of the circumcircle of $f$.
\end{remark}

In \cite{DavidEynard2014}, the authors proved that the quantity $d_{ijk}(T)$ changes covariantly when changing the points $(ijk)$, a useful property for demonstrating theorem \ref{propmaximal}:
\begin{lemma}
\label{covariance}
The quantity 
$$
\frac{
{d_{(ijk)}(T)}
}{ {\left| \Delta_{3} (i,j,k) \right| }^{2}} \ ,
$$ with
${\Delta_{3}} (i,j,k)=( z_{i}-z_{j})( z_{i}-z_{k})( z_{j}-z_{k})$, is independent of the choice of the three fixed points $\{ z_{i}, z_{j}, z_{k} \}$.
\end{lemma}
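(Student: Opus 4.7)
\emph{Plan.} The strategy is to expose the PSL(2,C) gauge symmetry of the full Hermitian form $[D_{u\bar v}]$ on $\mathbb{C}^{N+3}$: after identifying its 3-dimensional kernel as the span of the three infinitesimal M\"obius generators, a short multilinear-algebra argument expresses each principal minor $d_{(ijk)}(T)$ as $|\Delta_3(i,j,k)|^2$ times a single global scalar.

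First I would pin down $\ker[D_{u\bar v}]$ as an $(N+3)\times(N+3)$ matrix. Since the K\"ahler form $\Omega_\mathscr{D.}$ descends to the Weil--Petersson form on $\mathcal{M}_{0,N+3}$ (as shown in Section \ref{ssWPmetric}), it must annihilate the tangent vectors to the PSL(2,C) orbits; equivalently, for each holomorphic vector field $\xi^{(p)}_u = z_u^p$ ($p=0,1,2$) generated by a one-parameter M\"obius subgroup, one has $D_{u\bar v}\,\xi^{(p)\,u} = 0$. This gives $v_p = (z_1^p,\ldots,z_{N+3}^p)^T \in \ker D$ for $p=0,1,2$. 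Since $\mathcal{M}_{0,N+3}$ has complex dimension $N$, the form $D$ has rank exactly $N$, so
\[
\ker [D_{u\bar v}] = \mathrm{span}\{v_0,v_1,v_2\}.
\]

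Next I would use a positive-semidefinite factorization $D = A A^\dagger$ with $A$ an $(N+3)\times N$ matrix of full column rank (obtained from the spectral decomposition of $D$). For any triple $I=\{i,j,k\}$ with complement $I^c$ of size $N$,
\[
d_{(ijk)}(T) = \det\bigl(A_{I^c}A_{I^c}^\dagger\bigr) = |\det A_{I^c}|^2,
\]
where $A_{I^c}$ is the $N\times N$ row-restriction of $A$. Since $\ker D = \ker A^\dagger$, the columns of $A$ are orthogonal to each $v_p$, i.e.\ $V^\dagger A = 0$ with $V = [v_0\,|\,v_1\,|\,v_2]$. This orthogonality constrains the top exterior product $A_1\wedge\cdots\wedge A_N\in \Lambda^N\mathbb{C}^{N+3}$ to be a scalar multiple of the Hodge dual $\star(v_0\wedge v_1\wedge v_2)$; reading off coefficients in the standard basis yields
\[
\det A_{I^c} = \lambda\,\varepsilon_I\,\overline{\det V_I}
\]
for a global constant $\lambda$ (and a sign $\varepsilon_I$) independent of $I$, where $V_I$ is the $3\times 3$ row-restriction of $V$. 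Since $V_I$ is the Vandermonde matrix of $(z_i,z_j,z_k)$, $\det V_I = (z_j-z_i)(z_k-z_j)(z_k-z_i) = \pm\,\Delta_3(i,j,k)$, and the covariance
\[
d_{(ijk)}(T) = |\lambda|^2\,|\Delta_3(i,j,k)|^2
\]
follows immediately.

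The main obstacle is justifying the kernel identification. The inclusion $\mathrm{span}\{v_0,v_1,v_2\} \subset \ker D$ relies on the M\"obius invariance of the K\"ahler \emph{form} on $\mathbb{C}^{N+3}$; for this one must verify that the prepotential $\mathcal{A}_T$ in \ref{AsumV} transforms under PSL(2,C) only by a pluriharmonic function of the $z_u$'s, so that its mixed second derivatives are invariant. This verification is slightly delicate because \ref{AsumV} singles out $\infty$ as the apex of each ideal tetrahedron; it can be carried out directly from the transformation properties of hyperbolic volumes under M\"obius maps, or inferred a posteriori from the explicit Weil--Petersson identification \ref{Del=WP} (whose local expression \ref{wDelFinal} is manifestly M\"obius-invariant up to pluriharmonic terms). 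The reverse inclusion $\mathrm{rank}(D)=N$ is the non-degeneracy of the WP form on $\mathcal{M}_{0,N+3}$, already invoked in Section \ref{ssWPmetric}. Once the kernel is pinned down, the remaining wedge-product manipulation is elementary.
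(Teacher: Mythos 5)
Your argument is sound and essentially complete; note that the paper itself does not prove Lemma \ref{covariance} but imports it from \cite{DavidEynard2014}, and the proof there rests on the same two ingredients you use (the three M\"obius null vectors of the full $(N+3)\times(N+3)$ Hermitian form, followed by the Grassmannian duality between the $N\times N$ minors of a corank-$3$ matrix and the $3\times3$ minors of a kernel basis). Three remarks. First, the step you rightly flag as delicate --- that $v_p=(z_u^p)_u$, $p=0,1,2$, annihilates the full matrix $D$ --- is established most cleanly not via the invariance-up-to-pluriharmonic argument but directly from the factorization $D=\frac{1}{4\mathrm{i}}AEA^\dagger$ with $A$, $E$ as in \ref{AueDef}--\ref{Eee'Def} (this $A$ is the vertex--edge matrix, not your Cholesky factor): one finds $(A^{T}v_p)_e=(z_u^p-z_{u'}^p)/(z_u-z_{u'})$, equal to $0$, $1$, $z_u+z_{u'}$ for $p=0,1,2$, and $E$ kills the last two vectors because the two faces adjacent to an edge contribute opposite amounts; antisymmetry of $E$ then gives $v_p^{T}D=0$. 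Second, this direct route is not merely cosmetic: your justification through $\Omega_{\mathscr{D.}}=\frac12\Omega_{\mathscr{W\!\!P}}$ only covers Delaunay triangulations, whereas Lemma \ref{covariance} is invoked in the proof of Theorem \ref{propmaximal} for the intermediate, generally non-Delaunay, triangulations $T_i$ produced by the Lawson flips; the computation from \ref{AueDef}--\ref{Eee'Def} is purely combinatorial and holds for any positively oriented triangulation, for which positive semidefiniteness (hence your factorization $D=BB^\dagger$) is guaranteed by the remark in Section \ref{sMaxPropMes}, and if the rank drops below $N$ all minors vanish and the identity is vacuous. Third, a bookkeeping point: $v_p^{T}D=0$ yields $V^{T}B=0$ rather than $V^{\dagger}B=0$, so the duality reads $\det B_{I^c}=\lambda\,\varepsilon_I\,\det V_I$ without the conjugate; this does not affect the conclusion $d_{(ijk)}(T)=|\lambda|^2\,|\Delta_3(i,j,k)|^2$.
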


\begin{proof}[Proof of the theorem \ref{propmaximal}]
Take a triangulation $T$ of the configuration $\{z_v\}$, then apply the recursive Lawson flip algorithm (see \cite{Brevilliers2008} or \cite{Lawson1972} for details on this algorithm, noted here LFA) to $T$, one obtains the Delaunay triangulation $T^{D}(\{z_{v}\})$ of the $\{z_{v}\}$. At each step, the LFA applies a single edge flip. Note $(T_{i})_{0\leq i \leq n}$ the sequence of successive triangulations obtained by the LFA, with $T_{0}=T$ and $T_{n}=T^{D}(\{z_{v}\})$:
\begin{equation}
T_{0}=T\xrightarrow[(a^1 d^1)]{(b^0 c^0)}T_{1}\xrightarrow[(a^2 d^2)]{(b^1 c^1)}\dots\xrightarrow[(a^n d^n)]{(b^{n-1} c^{n-1})}T_{n}=T^{D}(\{z_{v}\})
\end{equation}
The LFA works in the following way: for $T_{i}$, if it is not Delaunay, at least one point, say $a^{i}$, is contained in the circumcircle of a neighboring face $(b^{i},c^{i},d^{i})$. The situation is depicted in figure \ref{lawson}. \\
\begin{figure}[!h]
\centering
\includegraphics[width=14cm]{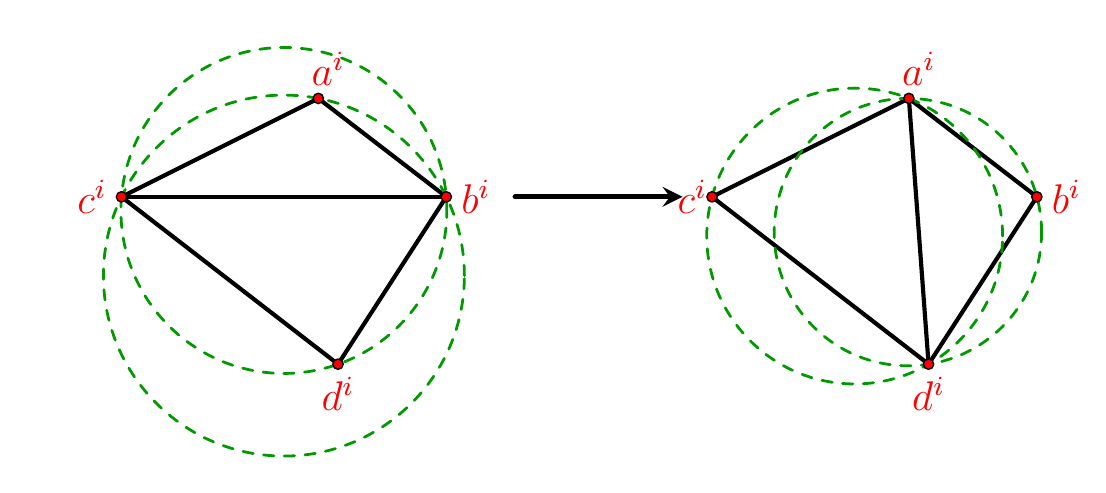}
\caption{Effect of a flip at one step of the LFA.}
\label{lawson}
\end{figure}

Then the edge $(b^{i},c^{i})$ is flipped to give $(a^{i},d^{i})$. It follows that for the two new faces $(a^{i},c^{i},d^{i})$ and $(a^{i},d^{i},b^{i})$, their circumcircles enclose respectively  neither $b^{i}$ nor $c^{i}$. Using lemma \ref{areadet} and \ref{covariance}:
\begin{eqnarray*}
d_{(123)}(T^{D})-d_{(123)}(T) & = & \sum_{i=0}^{n-1}\left[  d_{(123)}(T_{i+1})-d_{(123)}(T_{i}) \right]\\
& = & \left| {\Delta_{3}} (1,2,3) \right|^{2}\sum_{i=0}^{n-1}\left[ \frac{d_{( a^{i},c^{i},d^{i})}(T_{i+1})-d_{( a^{i},c^{i},d^{i})}(T_{i}) }{\left| {\Delta_{3}} (a^{i},c^{i},d^{i}) \right|^{2}}\right]\\
& \geq & 0
\end{eqnarray*}
which ends the proof.
\end{proof}

The measure $d\mu(\mathbf{z})$ used is then maximal over the Delaunay triangulations.  

\subsection{Growth of the volume}
\label{ssGrowVol}
The second result relates to the N dependence of the total volume 
\begin{equation}
V_N = \int_{{\mathbb C}^{N}}\prod_{v=4}^{N+3}d^{2}z_{v} 2^N \det{\left[D_{u\bar v}(T^{D}(\{z_v\}))\right]}_{u,v\notin\{ z_1,z_2,z_3\} }
\end{equation}
It is the volume of the space of Delaunay triangulations with $N+3$ vertices with the measure $d\mu(\mathbf{z})$. A  lower bound of the growth of the volume when the number of vertices increases is given by the following inequalities:
\begin{theorem}
If we add a $N+4$'th point to a given triangulation and integrate over its position, the following inequality holds:
\begin{eqnarray}
\label{ineqvolume}
    \int_{\mathbb{C}}d^2 z_{N+4} \det{\left[D_{u\bar v}(T^{D}(\{z_1,\dots,z_{N+4}\}))\right]}_{u,v\notin\{ z_1,z_2,z_3\} }\cr
    \geq (N+1)\frac{\pi^2}{8} \det{\left[D_{u\bar v}(T^{D}(\{z_1,\dots,z_{N+3}\}))\right]}_{u,v\notin\{ z_1,z_2,z_3\} }
    \end{eqnarray}
It implies the inequality for the total volumes
\begin{equation}
V_{N+1}\geq (N+1)\frac{\pi^2}{8}V_{N}.
\end{equation}
\end{theorem}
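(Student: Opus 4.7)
The plan is to combine the maximality property of Theorem \ref{propmaximal} with a face-by-face local integral estimate.

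First, I would apply maximality pointwise. Let $T^D := T^D(\{z_1,\ldots,z_{N+3}\})$ denote the fixed Delaunay triangulation, and decompose $\mathbb{C}$ (up to a null set) into its closed faces. For $w = z_{N+4}$ lying in a face $f = (a,b,c)$ of $T^D$, let $T_f(w)$ be the \emph{star-subdivision} triangulation of $\{z_1,\ldots,z_{N+4}\}$: it agrees with $T^D$ outside $f$ and replaces $f$ by the three sub-triangles $(a,b,w)$, $(b,c,w)$, $(c,a,w)$. Theorem \ref{propmaximal} applied pointwise in $w$ yields $\det D(T^D(\{z_1,\ldots,z_{N+4}\})) \geq \det D(T_f(w))$ for $w \in f$, so after summing the integrals over faces,
\beq
\int_{\mathbb{C}} d^2 w\,\det D(T^D(\{z_1,\ldots,z_{N+4}\})) \;\geq\; \sum_{f \in \mathcal{F}(T^D)}\int_f d^2 w\,\det D(T_f(w)).
\eeq

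Next, I would exploit the local structure of the perturbation. Because
\beq
\mathcal{A}_{T_f(w)} - \mathcal{A}_{T^D} = \mathbf{V}(a,b,c) - \mathbf{V}(a,b,w) - \mathbf{V}(b,c,w) - \mathbf{V}(c,a,w)
\eeq
depends only on $\{a,b,c,w\}$, the matrix $D(T_f(w))$ agrees with $D(T^D)$ outside the $\{a,b,c\}$-block, and the new $w$-index couples only to vertices in $\{a,b,c\}$. A Schur-complement decomposition with respect to the splitting ``outside $\{a,b,c,w\}$'' versus ``inside'' factorizes $\det D(T_f(w))$ as $\det D(T^D)\cdot \Phi_f(w)$ plus correction terms from the perturbation $\delta M$ of the $\{a,b,c\}$-block; these corrections should be tractable using the translation invariance of $\Delta\mathcal{A}$, whose local $4\times 4$ Hessian has the vector $(1,1,1,1)$ in its kernel.

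Finally, I would evaluate $\int_f d^2 w\,\Phi_f(w)$, which is the central computation. The natural coordinates on $f$ are the three angles $\alpha_1, \alpha_2, \alpha_3$ subtended at $w$ by the sides $(b,c)$, $(c,a)$, $(a,b)$, satisfying $\alpha_1+\alpha_2+\alpha_3=2\pi$ and parameterizing the interior of $f$. Using the Lobachevsky-function representation $\mathbf{V}(a,b,c) = \Lambda(\alpha)+\Lambda(\beta)+\Lambda(\gamma)$ for the hyperbolic volumes, together with the Jacobian of this change of variables, the integrand reduces to an elementary kernel on the angular 2-simplex. A direct computation should yield $\int_f d^2 w\,\det D(T_f(w)) \geq \frac{\pi^2}{16}\det D(T^D)$, so that summing over the $|\mathcal{F}(T^D)| = 2(N+1)$ faces of $T^D$ produces inequality \ref{ineqvolume}; integrating over $z_4, \ldots, z_{N+3}$ then gives $V_{N+1} \geq (N+1)\frac{\pi^2}{8} V_N$.

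The hard part will be this last step: carrying through the Schur-complement algebra (including the $\delta M$ corrections), and evaluating the resulting 2D integral in angular coordinates to extract exactly the constant $\pi^2/16$ per face from the Lobachevsky kernel and its Jacobian.
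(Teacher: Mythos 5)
Your overall architecture matches the paper's: use Theorem \ref{propmaximal} to replace the unknown Delaunay triangulation of the $N+4$ points by a star-subdivision of a face $f$ of $T^D$, factor the determinant into the old one times a local one-point factor, show that the local factor integrates to at least $\pi^2/16$ per face, and multiply by the $2(N+1)$ faces. However, the two steps you defer as ``should be tractable'' and ``should yield'' are precisely the technical content of the proof, and as you have set them up both contain genuine gaps. The factorization is the more serious one: keeping the excluded triple at $\{z_1,z_2,z_3\}$ and Schur-complementing around the perturbed $\{a,b,c,w\}$ block leaves correction terms of no definite sign that you do not control. The paper's key move, which your proposal is missing, is Lemma \ref{covariance}: the covariance of $d_{(ijk)}(T)/|\Delta_3(i,j,k)|^2$ under change of the excluded triple lets one re-express the determinant with $a,b,c$ (the vertices of $f$) excluded instead. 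Once $a,b,c$ are removed from the matrix, the new vertex $w$ couples to nothing that remains, all other entries are unchanged (the prepotential changes only by terms involving $a,b,c,w$), so the determinant factorizes \emph{exactly} as $\det\left[D\right]^{\mathrm{old}}_{u,v\notin\{a,b,c\}}\cdot D_{w\bar w}$ with no corrections; the covariance lemma is then applied again to return to the $\{z_1,z_2,z_3\}$ normalization.

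For the local integral, you propose to integrate over the Euclidean face $f$ in the angles subtended at $w$ (summing to $2\pi$) via Lobachevsky functions; you assert the answer is at least $\pi^2/16$ but give no computation, and $D_{w\bar w}\,d^2w$ is not a simple density in those coordinates. The paper instead uses the fact, quoted from \cite{DavidEynard2014}, that in a basis $(\theta_1,\theta_2)$ of circumcircle-intersection angles of the four-point configuration $\{a,b,c,w\}$ the measure is exactly $\frac{1}{2}\,d\theta_1\wedge d\theta_2$, and it integrates not over $f$ but over a region $\mathcal{B}(f)$ bounded by arcs through pairs of vertices orthogonal to the circumcircle of $f$, chosen precisely so that it becomes the simplex $\theta_i^{\min}\le\theta_i\le\theta_i^{\min}+\frac{\pi}{2}$ of area $\pi^2/8$ in angle space, giving $\pi^2/16$ \emph{independently of the shape of} $f$. (The actual decomposition of the sphere is into curvilinear regions $\mathcal{R}(f)\supseteq\mathcal{B}(f)$ attached to the faces; integrating over all of $\mathcal{R}(f)$ yields the refinement of Theorem \ref{refined}.) Your choice of the faces themselves as integration domains is not obviously wrong, but the bound $\int_f D_{w\bar w}\,d^2w\ge\pi^2/16$ is not established, is not manifestly uniform in the shape of the triangle --- which is exactly what the restriction to $\mathcal{B}(f)$ buys --- and for an obtuse face $\mathcal{B}(f)$ need not be contained in $f$, so you cannot simply inherit the paper's bound by monotonicity either.
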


Before proving the theorem, let us stress that this growth property is global w.r.t. the last point. 
A similar inequality does not stand locally for the measure $\det{\left[D_{u\bar v}(T^{D}(\{z_v\}))\right]}_{u,v\notin\{ z_1,z_2,z_3\} }$ when one adds a vertex at a fixed position to an existing Delaunay triangulation.
This has been checked numerically.\\

\begin{proof}
We first focus on the inequality \ref{ineqvolume}. The proof follows the following procedure:
    \begin{itemize}
        \item Fix $N+3$ points $\{z_1 , \dots,z_{N+3}\}$ in $\mathbb{C}$, and note $T^D (\{z_v\})$ the Delaunay triangulation constructed on this configuration.
        \item Pave the Riemann sphere with regions $\mathcal{R}(f)$ (defined bellow) associated with the faces $f$ of the triangulation.
        \item Then add a point $z_{N+4}$ in $\mathbb{C}$ to this triangulation. Depending on the region $\mathcal{R}(f)$ where it stands, transform the triangulation to include the new point and compute the measure associated with this triangulation.
        \item Integrate over $z_{N+4}$, find a  lower bound on of the integral, and compare the result with the measure associated with $T^D (\{z_v\})$.
    \end{itemize}
    For the Delaunay triangulation constructed over $\{z_1,\dots z_{N+3}\}$, the Riemann sphere can be conformally paved with regions $\mathcal{R}(f)$ associated to each face in the following way. Let us look at the edge $e$ whose neighboring faces are $f$ and $f'$. The circumcircles of $f$ and $f'$ meet at the vertices located at the ends of $e$ with an angle $\theta'(e)=(\pi-\theta(e))$. Define $\mathcal{C}_e$ the arc of a circle joining the ends of $e$, and making an angle $\theta'(e)/2={(\pi -\theta(e))}/{2}$ with each of the circumcircles of $f$ and $f'$ at the vertices of $e$. See figure \ref{arc_bissecteur}. The region $\mathcal{R}(f)$ is now defined as the domain enclosed in the three arcs of a circle $\mathcal{C}_{e_1}$, $\mathcal{C}_{e_2}$, $\mathcal{C}_{e_3}$ corresponding to the three edges $e_1$, $e_2$, $e_3$ surrounding $f$ (see figure \ref{def_regR}). This domain is now transformed covariantly under a M\"obius transformation.\\
    \\
    \begin{figure}[!h]
    \centering
    \includegraphics[width=8cm]{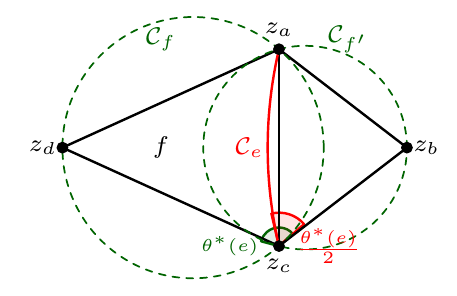}
    \caption{Definition of the arc $\mathcal{C}_e$.}
    \label{arc_bissecteur}
    \end{figure}
    We add the point $z_{N+4}$ in the Riemann sphere. If $z_{N+4}\in\mathcal{R}(f)$, we construct the triangulation $T_f^D(\{z_1,\dots,z_{N+3}\},z_{N+4})$ by joining the vertex $z_{N+4}$ to the vertices $a$, $b$ and $c$ of the face $f$. The triangulation  $T_f^D(\{z_1,\dots,z_{N+3}\},z_{N+4})$ is in general different from the Delaunay triangulation $T^D(\{z_1,\dots,z_{N+4}\})$. Yet it is still possible to define the measure $\det{D_{u\bar v}(T_f^D (\{z_v\},z_{N+4}))}$, which is still a positive quantity, and which, from theorem \ref{propmaximal}, satisfies:
    \begin{equation}
    \det{\left[D_{u\bar v}(T_f^D (\{z_v\},z_{N+4}))\right]}_{u,v\notin\{ z_1,z_2,z_3\}}\leq\det{\left[D_{u\bar v}(T^D (\{z_1,\dots,z_{N+4}\}))\right]}_{u,v\notin\{ z_1,z_2,z_3\}}
    \end{equation}

    The aim is to find a  lower bound to the integral over each region $\mathcal{R}(f)$. The interesting result is that we found a  lower bound that does not depend on the region, although the shapes of the regions depend on the angle $\theta(e)$ between two neighboring circumcircles. We take this dependence out by integrating over smaller regions $\mathcal{B}(f)\subseteq \mathcal{R}(f)$. for the face $f$, $\mathcal{B}(f)$ is the region enclosed by the three arcs of circle that pass through two of the vertices of $f$ and that are orthogonal to the circumcircle of $f$ (see figure \ref{def_regB}).\\
    
    \begin{figure}
    \minipage{0.4\textwidth}
        \includegraphics[width=\linewidth]{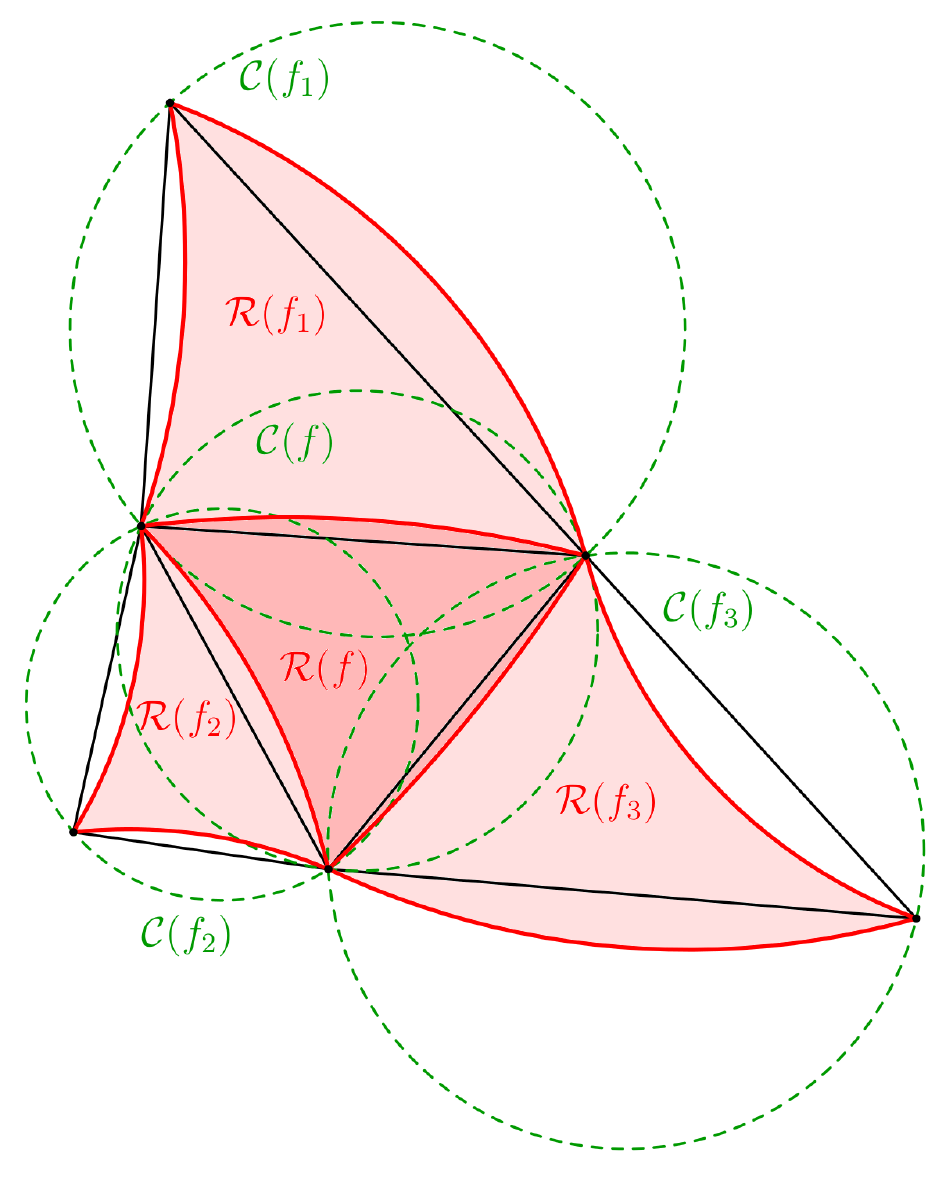}
        \caption{The region $\mathcal{R}(f)$ is enclosed in the bissector arcs.}
        \label{def_regR}
    \endminipage\hfill
    \minipage{0.4\textwidth}
    \includegraphics[width=\linewidth]{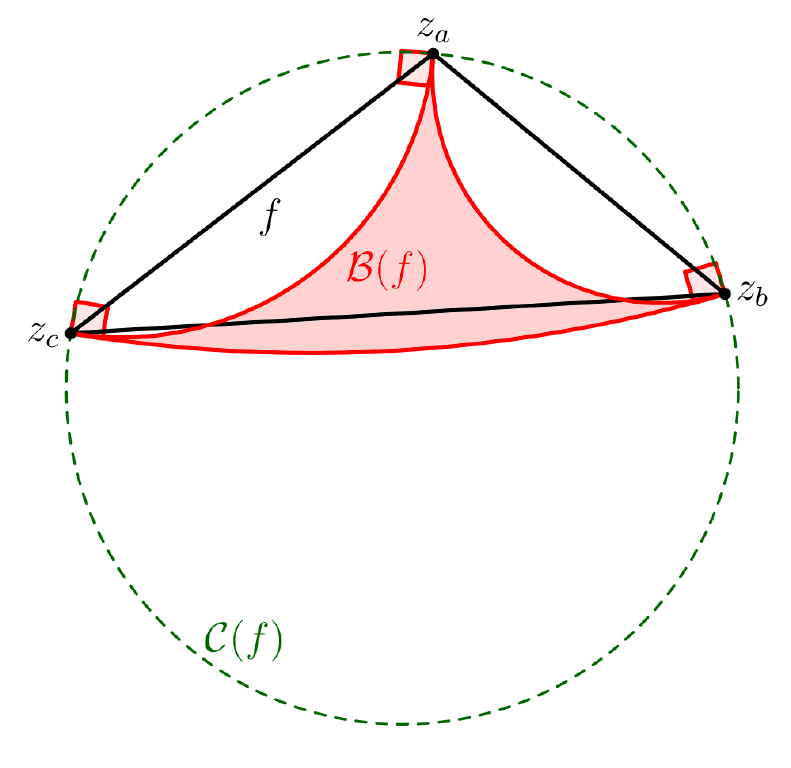}
    \caption{The region $\mathcal{B}(f)$ associated with a face $f$.}
    \label{def_regB}
    \endminipage
    \end{figure}

    The integration over $z_{N+4}$ thus decomposes in the following way:
    \begin{align}
    \int_{\mathbb{C}}d^2 z_{N+4} \, & \det{\left[D_{u\bar v}(T^{D}(\{z_1,\dots,z_{N+4}\}))\right]}_{u,v\notin\{ z_1,z_2,z_3\} }
    \nonumber\\
    &=\sum_{f}\int_{\mathcal{R}(f)}d^2 z_{N+4} \det{\left[D_{u\bar v}(T^{D}(\{z_1,\dots,z_{N+4}\}))\right]}_{u,v\notin\{ z_1,z_2,z_3\}}  \nonumber\\
    &\geq \sum_{f}\int_{\mathcal{R}(f)}d^2 z_{N+4} \det{\left[D_{u\bar v}(T_f^{D}(\{z_v\},z_{N+4}))\right]}_{u,v\notin\{ z_1,z_2,z_3\}}  \nonumber\\
    &\geq \sum_{f}\int_{\mathcal{B}(f)}d^2 z_{N+4} \det{\left[D_{u\bar v}(T_f^{D}(\{z_v\},z_{N+4}))\right]}_{u,v\notin\{ z_1,z_2,z_3\}}
    \end{align}
    In the last line, the integral can be computed explicitly. If $z_{N+4}\in\mathcal{B}(f)$ with  $f=(abc)$, one can compute the integration on $\mathcal{B}(f)$ using lemma \ref{covariance}: 
    \begin{align}
    \int_{\mathcal{B}(f)}d^2 z_{N+4}\ & \det{\left[D_{u\bar v}(T_f^{D}(\{z_v\},z_{N+4}))\right]}_{u,v\notin\{ z_1,z_2,z_3\}}
    \nonumber\\
    &=\frac{\Delta_3 (z_1,z_2,z_3)}{\Delta_3 (a,b,c)}\int_{\mathcal{B}(f)}d^2 z_{N+4} \det{\left[D_{u\bar v}(T_f^{D}(\{z_v\},z_{N+4}))\right]}_{u,v\notin\{ a,b,c\}}.
    \end{align}
    Then the right term factorizes nicely thanks to the shape of the triangulation around $z_{N+4}$:
    
    \begin{align}
    \label{encours}
     & \int_{\mathcal{B}(f)} d^2 z_{N+4} \ \det{\left[D_{u\bar v}(T_f^{D}(\{z_1,\dots,z_{N+4}\}))\right]}_{u,v\notin\{ a,b,c\}}\nonumber\\
    &=\int_{\mathcal{B}(f)}\hskip -1.em d^2 z_{N+4} \det{\left[D_{u\bar v}(T^{D}(\{z_1,\dots,z_{N+3}\}))\right]}_{u,v\notin\{ a,b,c\}}\times\det{\left[D_{u\bar v}(T^{D}(\{a,b,c,z_{N+4}\}))\right]}_{u,v\notin\{ a,b,c\}}.
    \end{align}
    
In the integrand, the term depending on $z_{N+4}$ is the second determinant, so we need to estimate:
\begin{equation}
\label{first_min}
I=\int_{\mathcal{B}(f)}d^2 z_{N+4} \det{\left[D_{u\bar{v}}(T^{D}(\{a,b,c,z_{N+4}\}))\right]_{u,v\notin\{ a,b,c\}}}
\end{equation}
It is the integral of the measure on the Delaunay triangulation made of the 4 points $a$, $b$, $c$ and $z_{N+4}$, where $z_{N+4}$ crosses the region $\mathcal{B}(f)$ (see figure \ref{triang_base}). The integral is computable if one considers the measure in terms of the angles (see equation \ref{muT}). With the notations of figure \ref{triang_base}:
\begin{equation}
I=\frac{1}{2}\int_{z_{N+4}\in\mathcal{B}(f)}d\theta_1 d\theta_2
\end{equation}
(We used here a result of the article \cite{DavidEynard2014}, expressing the measure in term of a basis of angles. Here, the angles $\theta_1$ and $\theta_2$ are a basis of this triangulation). 
\begin{figure}
\centering
\includegraphics[width=13.5cm]{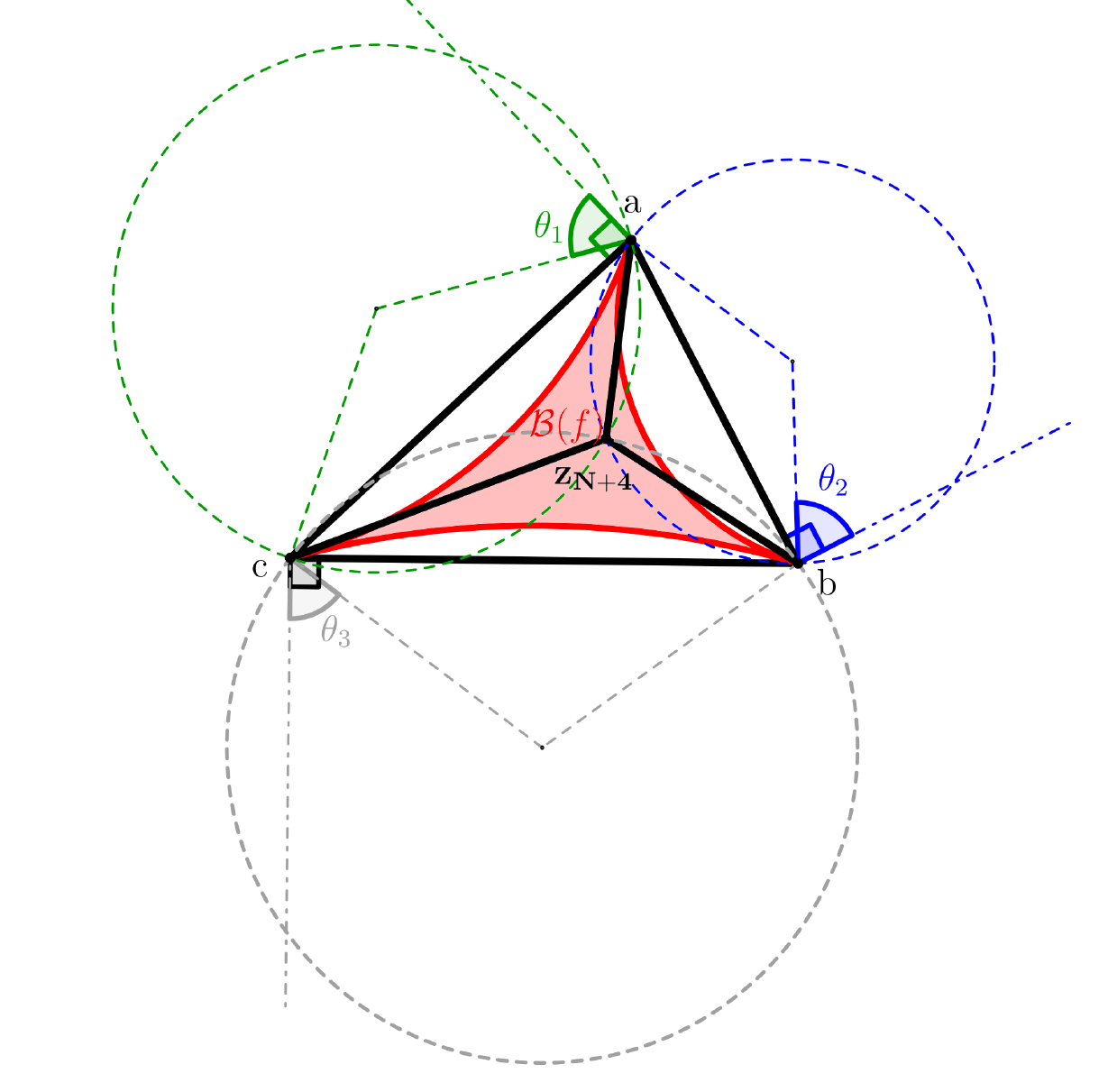}
\caption{The Delaunay triangulation (in black) with the associated circumcircles. The center of the external face is at $\infty$.}
\label{triang_base}
\end{figure}
The point $z_{N+4}$ belongs to the region $\mathcal{B}(f)$ if $ \theta_i^{\mathrm{min}}\leq \theta_i\leq \theta_i^{\mathrm{min}}+\frac{\pi}{2}$ for $i=1,2,3$. $\theta_i^{\mathrm{min}}$ corresponds to the angle $\theta_i$ for which the point $z_{N+4}$ is on the boundary arc of $\mathcal{B}(f)$ associated with the edge $i$.\\
We also have $\theta_1+\theta_2+\theta_3=\pi$ and $\theta_1^{\mathrm{min}}+\theta_2^{\mathrm{min}}+\theta_3^{\mathrm{min}}=\frac{\pi}{2}$, so eventually, $z_{N+4}\in\mathcal{B}(f)$ if:
\begin{align}
\label{ineq}
&\theta_1^{\mathrm{min}}\leq  \theta_1  \leq \theta_1^{\mathrm{min}}+\frac{\pi}{2}&\\
&\theta_2^{\mathrm{min}}\leq  \theta_2  \leq \theta_2^{\mathrm{min}}+\frac{\pi}{2}&\\
&\theta_1^{\mathrm{min}}+\theta_2^{\mathrm{min}}\leq  \theta_1 + \theta_2 \leq \theta_1^{\mathrm{min}}+\theta_2^{\mathrm{min}} +\frac{\pi}{2}&
\end{align}
From these conditions we immediately obtain that $I=\frac{1}{2}\left[\frac{1}{2}\left(\frac{\pi}{2}\right)^2\right] =\frac{\pi^2}{16}$. Then, one gets in equation \ref{encours}:
\begin{align}
\int_{\mathcal{B}(f)} d^2 z_{N+4}  \, & \det{\left[D_{u\bar v}(T_f^{D}(\{z_v\},z_{N+4}))\right]}_{u,v\notin\{ a,b,c\}}\nonumber\\
&=\frac{\pi^2}{16}\det{\left[D_{u\bar v}(T^{D}(\{z_v\}))\right]}_{u,v\notin\{ a,b,c\}}\nonumber\\
&=\frac{\pi^2}{16}\frac{\Delta_3 (a,b,c)}{\Delta_3 (z_1,z_2,z_3)}\det{\left[D_{u\bar v}(T^{D}(\{z_v\}))\right]}_{u,v\notin\{  z_1,z_2,z_3\}}
\end{align}\
So in the end:
\begin{align}
    \int_{\mathbb{C}}d^2 z_{N+4} \, & \det{\left[D_{u\bar v}(T^{D}(\{z_1,\dots,z_{N+4}\}))\right]}_{u,v\notin\{ z_1,z_2,z_3\} }
    \nonumber\\
&\geq \sum_{f}\frac{\pi^2}{16}\det{\left[D_{u\bar v}(T^{D}(\{z_v\}))\right]}_{u,v\notin\{  z_1,z_2,z_3\}}\nonumber\\
& \geq (N+1)\frac{\pi^2}{8}\det{\left[D_{u\bar v}(T^{D}(\{z_v\}))\right]}_{u,v\notin\{  z_1,z_2,z_3\}}
\end{align}
which gives the result:
\begin{equation}
V_{N+1}\geq (N+1)\frac{\pi^2}{8}V_{N}
\end{equation}
    
\end{proof}

The previous result gives a  lower bound which does not depend on the shape of the triangle by integrating over a restrained region $\mathcal{B}(f)$. If we do the same calculation and keep the region $\mathcal{R}(f)$, then the  lower bound is more accurate, but not universal any more. In this case, we then get a refined result:
\begin{theorem}
\label{refined}
\begin{align}
    \int_{\mathbb{C}}d^2 z_{N+4} \, & \det{\left[D_{u\bar v}(T^{D}(\{z_1,\dots,z_{N+4}\}))\right]}_{u,v\notin\{ z_1,z_2,z_3\} }
    \nonumber\\
& \geq \left[(N+1)\frac{\pi^2}{8}+\frac{1}{8}\sum_{e\in\mathcal{E}}\theta_e (2\pi-\theta_e)\right]\det{\left[D_{u\bar v}(T^{D}(\{z_v\}))\right]}_{u,v\notin\{  z_1,z_2,z_3\}}
\end{align}
\end{theorem}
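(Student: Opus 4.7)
The plan is to follow, step by step, the proof of the previous theorem, but to keep the full region $\mathcal{R}(f)$ in place of the smaller $\mathcal{B}(f)\subseteq \mathcal{R}(f)$. Using Theorem \ref{propmaximal} to bound $\det[D(T^D(\{z_1,\dots,z_{N+4}\}))]$ from below by $\det[D(T^D_f)]$ on each $\mathcal{R}(f)$, and then applying Lemma \ref{covariance} together with the factorization identity (\ref{encours}), one reduces the problem to evaluating, for each face $f=(a,b,c)$,
\begin{equation*}
I_{\mathcal{R}(f)}=\int_{\mathcal{R}(f)} d^2 z_{N+4}\, \det\bigl[D_{u\bar v}(T^D_f(\{a,b,c,z_{N+4}\}))\bigr]_{u,v\notin\{a,b,c\}} = \frac{1}{2}\int_{z_{N+4}\in\mathcal{R}(f)} d\theta_1\, d\theta_2,
\end{equation*}
exactly as in the previous proof but on the larger domain, where $(\theta_1,\theta_2)$ is the angle basis of the 4-point sub-triangulation.

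The new geometric ingredient is the area of $\mathcal{R}(f)$ in the $(\theta_1,\theta_2)$-plane. Since each bisector arc $\mathcal{C}_{e_i}$ and the orthogonal arc bounding $\mathcal{B}(f)$ on the same side share their two endpoints (the vertices of the edge $e_i$), the regions $\mathcal{R}(f)$ and $\mathcal{B}(f)$ have the same three corners in this plane, and $\mathcal{R}(f)\setminus\mathcal{B}(f)$ splits into three pairwise-disjoint ``bulges'', one attached to each side of the triangle $\mathcal{B}(f)$ of area $\pi^2/8$. To evaluate the area of the $i$-th bulge, one would parameterize $\mathcal{C}_{e_i}$ through its defining property (it meets the circumcircle of $f$ at angle $(\pi-\theta_{e_i})/2$ at $v_{i,1},v_{i,2}$), translate this into a condition on $\theta_i$ using the addition of angles at the edge $e_i$ (the new angle $\theta_i$ is fixed by the angle between the circumcircle of $(v_{i,1},v_{i,2},z_{N+4})$ and that of $f$, combined with the old $\theta_{e_i}$), and integrate using the same Jacobian $\tfrac12$ as in the previous proof. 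The expected output of this calculation is that each bulge has area exactly $\theta_{e_i}(2\pi-\theta_{e_i})/8$.

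Granting this, one obtains
\begin{equation*}
I_{\mathcal{R}(f)} = \frac{\pi^2}{16}+\sum_{e\in f}\frac{\theta_e(2\pi-\theta_e)}{16},
\end{equation*}
and summing over the $2(N+1)$ faces of the Delaunay triangulation of the sphere with $N+3$ vertices -- each edge $e\in\mathcal{E}$ being shared by exactly two faces, so that $\sum_f\sum_{e\in f}X(e)=2\sum_{e\in\mathcal{E}}X(e)$ -- yields the announced lower bound. The main obstacle is the geometric computation of the bulge areas: one must identify the image of each bisector arc in $(\theta_1,\theta_2)$-coordinates, check pairwise disjointness of the three bulges so that their areas simply add, and carry out the integration to extract precisely $\theta_{e_i}(2\pi-\theta_{e_i})/8$. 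The formula is consistent with two sanity checks: at $\theta_{e_i}=0$ the bisector and orthogonal arcs coincide so the $i$-th bulge collapses, while at $\theta_{e_i}=\pi$ the four points are cocyclic, the bisector arc degenerates maximally, and the contribution saturates at $\pi^2/8$, the full area of $\mathcal{B}(f)$ itself.
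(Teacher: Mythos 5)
Your proposal follows essentially the same route as the paper's proof: replace $\mathcal{B}(f)$ by $\mathcal{R}(f)$ in the argument of the previous theorem, reduce via Lemma \ref{covariance} and the factorization \ref{encours} to the area of $\mathcal{R}(f)$ in the angle coordinates $(\theta_1,\theta_2)$ with the Jacobian $\tfrac12$, and sum over the $2(N+1)$ faces using the fact that each edge is shared by two faces. The bulge areas you guess, $\theta_{e}(2\pi-\theta_{e})/8$, are exactly right, and your two degenerate-case checks are consistent with the paper. The one step you explicitly leave open --- the image of the bisector arcs in angle coordinates --- is handled in the paper by observing that $z_{N+4}\in\mathcal{R}(f)$ amounts to relaxing each lower bound $\theta_i\ge\theta_i^{\mathrm{min}}$ of the $\mathcal{B}(f)$ case to $\theta_i\ge\theta_i^{\mathrm{min}}-\theta_{e_i}/2$, the upper bounds $\theta_i\le\theta_i^{\mathrm{min}}+\pi/2$ and the constraint $\theta_1+\theta_2+\theta_3=\pi$ being unchanged; the three excess strips of the resulting hexagon are pairwise disjoint for exactly the reason you anticipate (two of them meeting would force one of the remaining constraints to be violated, e.g.\ $\tilde\theta_1<0$ together with $\tilde\theta_1+\tilde\theta_2>\pi/2$ forces $\tilde\theta_2>\pi/2$), and the elementary integral $\int_0^{\theta_e/2}\left(\tfrac{\pi}{2}-t\right)dt=\tfrac{1}{8}\theta_e(2\pi-\theta_e)$ yields your claimed strip areas. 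With that computation inserted, your argument coincides with the paper's and is complete.
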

(See appendix \ref{appB} for a proof).
We see that the angles associated to the triangulation appear. This angle-dependent term should be related to the kinetic term of the Liouville action in the continuum limit.

\section{Conclusion}
 It is believed that the quantum Liouville theory is the continuum limit of models of random maps equipped with a measure, as long as the models stand in the universality class of pure gravity. Although for a given finite $N$ (the number of vertices) each Delaunay triangulation allows to define a continuous metric over the  sphere (by gluing the Beltrami metrics on the triangles along their edges), therefore giving a global metric for each triangulation, the set of random metrics one looks at in the continuum limit is not this set of metrics given by triangulations of finite size $N$. The continuum limit means that one takes the number $N$ of vertices of the maps going to infinity $N\to\infty$, while keeping the maps random.
The goal is thus to understand how the probability space of Delaunay triangulations equipped with the measure given by the Lebesgue measure of circumcircle crossing angles, introduced in \cite{DavidEynard2014}, can converge (in the continuum limit) towards the quantum Liouville theory.
\medskip

In this article we continued to study the properties of this measure.

\begin{itemize}

\item In particular, we could compare the measure to the Weil-Petersson volume form. This allowed us to give an hyperbolic  representation for the measure.

\item We found an interesting property of maximality: our measure can be analytically continued to non-Delaunay triangulations, but is maximal exactly for Delaunay triangulations. This could open the possibility of some convexity properties, that need to be further explored, and that could be helpful in studying the continuum limit.

\item We found a lower bound on  the volume when one adds a new vertex $N\to N+1$. We have both local and global bounds. We could show that the partition function $V_N/N!$ grows at least like $(\pi^2/8)^N$. In other words, the log of the volume contains at least a term proportional to $N$, which can be interpreted as the "quantum area". 
If, as we expect, the continuum limit exists and is the Liouville theory at $c=0$ ($\gamma=\sqrt{8/3}$), then the log of the volume should converge towards the Liouville action.
The Liouville action is made of 2 terms ; one is the quantum area, the other is the kinetic energy.
The term we have found is compatible with a continuum limit of the quantum area.

It would be interesting to improve our bound, by taking into account the edges (integrals over $\mathcal R(f) -\mathcal B(f)$) contributions to see if they can account for the kinetic-term in the Liouville action.

\end{itemize}

All these results are encouraging steps towards a continuum limit that would be the Liouville pure gravity theory. 
Consequences need to be further explored.

\section*{Acknowledgements}

We are very grateful to Timothy Budd for his interest and for sharing his insights and the results of his unpublished calculations. F.D. is very much indebted to Jeanne Scott for her interest, her dedicated guidance in the mathematical literature and many lengthy and inspiring discussions. The results of section 3 owe much to her. We also thank G. Borot, L. Chekov, P. di Francesco and R. Kedem for fruitful discussions on the subject.

F.D. thanks the Perimeter Institute, U. of Iceland and Universidad de los Andes, Bogot\'a , where part of this work was done, for their hospitality.

B.E.'s work was supported by the ERC Starting Grant no. 335739 ``Quantum fields and knot homologies'' funded by the European Research Council under the European Union's Seventh Framework Programme, as well as the Centre de Recherches Math\'ematiques de Montr\'eal, the FQRNT grant from the Qu\'ebec government,
and the ANR grant "quantact".

\appendix
\section{Change of the measure with a flip: proof of lemma \ref{areadet}}\label{appA}
When the triangulation $T$ undergoes a flip to give the triangulation $T'$, only the two faces surrounding the edge change. So in the prepotentials $\mathcal{A}(T)$ and $\mathcal{A}(T')$, the only terms that differ are those implying the changed faces:
\begin{equation}
\mathcal{A}(T)-\mathcal{A}(T')=\mathbf{V}(124)+\mathbf{V}(234)-\mathbf{V}(123)-\mathbf{V}(134)
\end{equation}
Therefore, the differences between $D(T)$ and $D(T')$ are located in the $D_{i,j}$ with $i,j\in\{1,2,3,4\}$. As we are looking at the quantities $d_{(124)}$, the indices 1, 2 and 4 are not taken into account in the determinant. So the differences between $D(T)$ and $D(T')$ lay in $D_{3,3}$.
By expanding the determinant with respect to the third line, we get:
\begin{equation}
d_{(124)}(T)-d_{(124)}(T')=\left[D_{3,3}(T)-D_{3,3}(T')\right]\mathrm{det}\left[(D_{u\bar{v}})_{u,v\neq\{1,2,3,4\}}(T)\right]
\end{equation}
Let us focus on the term $D_{3,3}(T)-D_{3,3}(T')$. Using the form $D=\frac{1}{4i}AEA^\dagger$, and noting $z_{ij}=z_i-z_j$ one gets:
\begin{align}
D_{3,3}(T)-D_{3,3}(T') &= \frac{1}{4i}\left[\sum_{e\to3}\sum_{e'\mathrm{\, neighbour\, of \, }e} A_{3,e}E_{e,e'}\bar{A}_{3,e'}\right]&\\
&= \frac{1}{4i}\left[\frac{1}{z_{31}}\frac{-1}{\bar{z}_{32}}+\frac{1}{z_{32}}\frac{1}{\bar{z}_{31}}+\frac{1}{z_{31}}\frac{1}{\bar{z}_{34}}-\frac{1}{z_{34}}\frac{1}{\bar{z}_{31}}-\frac{1}{z_{32}}\frac{1}{\bar{z}_{34}}+\frac{1}{z_{34}}\frac{1}{\bar{z}_{32}}\right]&\\
&=\frac{1}{4i}\frac{z_{32}z_{34}\bar{z}_{31}\bar{z}_{42}+z_{31}z_{32}\bar{z}_{34}\bar{z}_{21}+z_{31}z_{34}\bar{z}_{32}\bar{z}_{14}}{|z_{31}|^2|z_{32}|^2|z_{34}|^2} &\\
&=\frac{1}{4i}\frac{N(z_3,\bar{z}_3)}{|z_{31}|^2|z_{32}|^2|z_{34}|^2}&
\end{align}
The coefficient of the term $z_3^2\bar{z}_3$ in $N(z_3,\bar{z}_3)$ gives $\bar{z}_{42}+\bar{z}_{21}+\bar{z}_{14}=0$. What is more, $\bar{N}(z)=-N(z)$, so $N$ can be written as  $N(z_3,\bar{z}_3)=a z_3\bar{z}_3+b z_3-\bar{b}\bar{z}_3+c$, with $a\in \mathrm{i}\mathbb{R}$, $b$ and $c\in \mathrm{i}\mathbb{R}$ functions of $z_i$, $\bar{z}_i$, $i=1,2,4$. Setting $\omega=-\frac{\bar{b}}{a}$ and $R^2=-\frac{c}{a}+|a|^2$,
\begin{equation}
N(z_3,\bar{z}_3)=a[(z_3-\omega)(\bar{z}_3-\bar{\omega})-R^2]
\end{equation}
$N(z_3,\bar{z}_3)=0$ is thus the equation of a circle for the point 3. As we have $N(z_i,\bar{z}_i)=0$ for $i=1,2,4$, the circle is the circumcircle of the face $f=(124)$, of center $\omega_f=\omega$ and radius $R_f=R$. The coefficient $a$ is given by $a=z_{41}\bar{z}_{21}-z_{21}\bar{z}_{41}$, which is the (euclidean) area of the face (124). Eventually we have:
\begin{equation}
D_{3,3}(T)-D_{3,3}(T')=\mathrm{Area}(f)\frac{|z_3-\omega_f|^2-R_f^2}{|z_{31}|^2|z_{32}|^2|z_{34}|^2}
\end{equation} 
which proves the lemma \ref{areadet}.

\section{Refined  lower bound for the volume: proof of theorem \ref{refined}}\label{appB}
The notations introduced here refer to the figure \ref{triang_base}. Each edge of the triangle $(abc)$ is surrounded by two faces. If we remove the point $z_{N+4}$, we obtain the Delaunay Triangulation for the points $\{z_1,\dots,z_{N+3}\}$, and the triangle $(abc)$ is one of its faces. Let us note $\theta_{(ab)}$, $\theta_{(bc)}$,and $\theta_{(ca)}$ the angles between the face $f=(abc)$ and the other face in contact with the edges $(ab)$, $(bc)$, and $(ca)$ respectively.\\
Now, in formula \ref{first_min}, instead of computing the integral of the measure over the region $\mathcal{B}(f)$, we carry out the integral over the region $\mathcal{R}(f)$. The integrand is not changed: it is the measure of the Delaunay triangulation made of the 4 points $a$, $b$, $c$ and $z_{N+4}$.
\begin{equation}
I_1 =\int_{\mathcal{R}(f)}d^2 z_{N+4} \det{\left[D_{u\bar{v}}(T^{D}(\{a,b,c,z_{N+4}\}))\right]_{u,v\notin\{ a,b,c\}}}.
\end{equation}
Then the computation of $I_1$ follows the same steps as for $I$, the only difference being the inequalities \ref{ineq} satisfied by $\theta_1$ and $\theta_2$:
\begin{align}
&\theta_1^{\mathrm{min}}-\frac{\theta_{(ca)}}{2}\leq  \theta_1  \leq \theta_1^{\mathrm{min}}+\frac{\pi}{2}&\\
&\theta_2^{\mathrm{min}}-\frac{\theta_{(ab)}}{2}\leq  \theta_2  \leq \theta_2^{\mathrm{min}}+\frac{\pi}{2}&\\
&\theta_1^{\mathrm{min}}+\theta_2^{\mathrm{min}}\leq  \theta_1 + \theta_2 \leq \theta_1^{\mathrm{min}}+\theta_2^{\mathrm{min}}+\frac{\theta_{(bc)}}{2} +\frac{\pi}{2}&
\end{align}
\begin{figure}
\centering
\includegraphics[width=13.5cm]{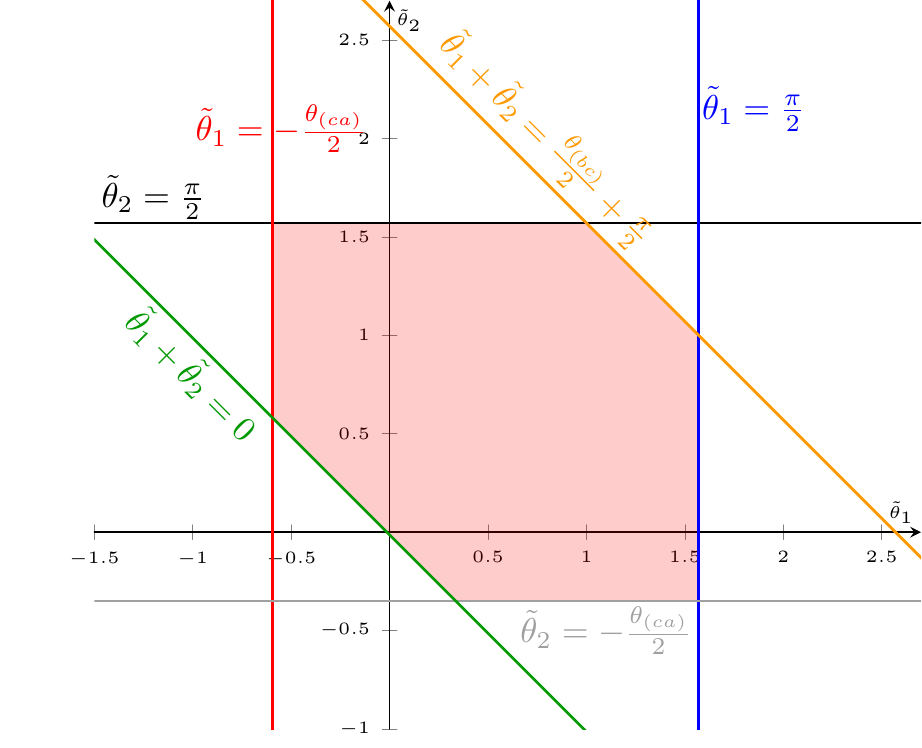}
\caption{Domain on which $d\tilde{\theta_1}d\tilde{\theta_2}$ has to be integrated. We take here $\tilde{\theta_i}=\theta_i-\theta_{i}^{\mathrm{min}}$.}
\label{repr_theta}
\end{figure}
The integral is then the area of the red region in figure \ref{repr_theta}.
So we get: $I_1=\frac{\pi^2}{16}+\frac{1}{16}[\theta_{(ab)}(2\pi-\theta_{(ab)})+\theta_{(bc)}(2\pi-\theta_{(bc)})+\theta_{(ca)}(2\pi-\theta_{(ca)})]$. Then, following the same steps as for the previous  lower bound, the result comes:
\begin{align}
    \int_{\mathbb{C}}d^2 z_{N+4} \, & \det{\left[D_{u\bar v}(T^{D}(\{z_1,\dots,z_{N+4}\}))\right]}_{u,v\notin\{ z_1,z_2,z_3\} }
    \nonumber\\
& \geq \left[(N+1)\frac{\pi^2}{8}+\frac{1}{8}\sum_{e\in\mathcal{E}}\theta_e (2\pi-\theta_e)\right]\det{\left[D_{u\bar v}(T^{D}(\{z_v\}))\right]}_{u,v\notin\{ z_1,z_2,z_3\}}
\end{align}

%

\begin{thebibliography}{}

\bibitem[Ambj{\o}rn et~al., 2012]{AmbjornBarkleyBudd2012}
Ambj{\o}rn, J., Barkley, J., and Budd, T. (2012).
\newblock Roaming moduli space using dynamical triangulations.
\newblock {\em Nucl. Phys. B}, 858:267--292.

\bibitem[Ambj{\o}rn et~al., 2005]{AmbjornDurhuusJonsson:2005}
Ambj{\o}rn, J., Durhuus, B., and Jonsson, T. (2005).
\newblock {\em Quantum Geometry : A Statistical Field Theory Approach}.
\newblock Cambridge University Press.

\bibitem[Benjamini, 2009]{Benjamini2010}
Benjamini, I. (2009).
\newblock Random planar metrics.
\newblock Unpublished, http://www.wisdom.weizmann.ac.il/~itai/randomplanar2.pdf.

\bibitem[Bobenko and Springborn, 2004]{BobenkoSpringborn2004}
Bobenko, A.~I. and Springborn, B.~A. (2004).
\newblock {Variational principles for circle patterns and Koebe's theorem}.
\newblock {\em Trans. Amer. Math. Soc.}, 356:659--689.

\bibitem[Br\'evilliers, 2008]{Brevilliers2008} Br\'evilliers, M. (2008).
\newblock Construction of the segment Delaunay triangulation by a flip algoritm.
\newblock{PhD thesis, \em https://tel.archives-ouvertes.fr/tel-00372365}

\bibitem[Budd, 2014]{Budd2014}
Budd, T. , private communication, April 2014.

\bibitem[Charbonnier et al., 2018]{CharbonierEynardDavid2017}
Charbonnier, S., Eynard, B. and David, F. (2018).
\newblock {Large Strebel Graphs and (3,2) Liouville CFT}.
\newblock {\em Ann. Henri Poincar\'e},  https://doi.org/10.1007/s00023-018-0662-x .

\bibitem[Chekhov et~al., 2013]{ChekhovEynardRibaut2013}
Chekhov, L., Eynard, B. and Ribault, S. (2013).
\newblock {Seiberg-Witten equations and non-commutative spectral curves in
  Liouville theory}.
\newblock {\em J. Math. Phys}, 54:022306--022306--21.

\bibitem[David, 1985]{David1984}
David, F. (1985).
\newblock Planar diagrams, two-dimensional lattice gravity and surface models.
\newblock {\em Nucl. Phys. B 257}, 257 [FS14]:45--58.

\bibitem[David, 1988a]{David1988-M}
David, F. (1988a).
\newblock Conformal field theories coupled to 2-d gravity in the conformal
  gauge.
\newblock {\em Mod. Phys. Lett. A}, 03:1651--1656.

\bibitem[David, 1988b]{David1988-F}
David, F. (1988b).
\newblock Sur l'entropie des surfaces al{\'e}atoires.
\newblock {\em C. R. Acad. Sci. Paris}, 307 S{\'e}rie II:1051--1053.

\bibitem[David and Eynard, 2014]{DavidEynard2014}
David, F. and Eynard, B. (2014).
\newblock Planar maps, circle patterns and 2D gravity.
\newblock {\em Ann. Inst. Henri Poincaré Comb. Phys. Interact. 1 (2014)}, 139-183 , DOI 10.4171/AIHPD/5 .

\bibitem[David et al., 2015]{DKRV2015}
David, F., Kupiainen, A., Rhodes, R. and Vargas, V. (2015).
\newblock Liouville Quantum Gravity on the Riemann Sphere.
\newblock {\em Commun. Math. Phys.}, 10.1007/s00220-016-2572-4.

\bibitem[Distler and Kawai, 1989]{DistlerKawai1989}
Distler, J. and Kawai, H. (1989).
\newblock Conformal field theory and 2-d quantum gravity.
\newblock {\em Nucl. Phys. B}, 321:509.

\bibitem[Fr{\"o}hlich, 1985]{Frohlich1985}
Fr{\"o}hlich, J. (1985).
\newblock The Statistical Mechanics of Surfaces
\newblock In {\em Sitges Conference on Statistical mechanics, June 1984}, Lecture Notes in Physics, Vol. 216, 31-51 (1985); L. Garrido (ed.),  Berlin. Springer.

\bibitem[Gaiotto and Rastelli,  2005]{Gaiotto-Rastelli}
\newblock A paradigm of open/closed duality: Liouville D-branes and the Kontsevich model
\newblock {\em JHEP} 2005, 07:053 (2005) 

\bibitem[Hamber, 2009]{Hamber2009}
Hamber, H.~W. (2009).
\newblock {\em {Quantum Gravitation: The Feynman Path Integral Approach}}.
\newblock Springer.

\bibitem[KMZ, 1996]{KMZ:1996}
Kaufmann, R., Manin, Yu., and Zagier, D. 
\newblock {Higher Weil-Petersson Volumes of Moduli Spaces of Stable n-Pointed Curves}.
\newblock {\em Commun. Math. Phys.} 181, 763 787 (1996)

\bibitem[Kazakov, 1985]{Kazakov1985}
Kazakov, V. (1985).
\newblock Bilocal regularization of models of random surfaces.
\newblock {\em Phys. Lett. B}, 150:282--284.

\bibitem[Knizhnik et~al., 1988]{KPZ:1988}
Knizhnik, V., Polyakov, A., and Zamolodchikov, A. (1988).
\newblock Fractal structure of 2d-gravity.
\newblock {\em Mod. Phys. Lett. A}, 03:819.

\bibitem[Koebe, 1936]{Koebe1936}
Koebe, P. (1936).
\newblock Kontaktprobleme der konformen Abbildung.
\newblock {\em Ber. S{\"a}chs. Akad. Wiss. Leipzig, Math.-Phys. Kl.}, 88:141--164.

\bibitem[Kontsevich, 1992]{Kontsevich1992}
Kontsevich, M. (1992).
\newblock Intersection theory on the moduli space of curves and the matrix Airy
  function.
\newblock {\em Commun. Math. Phys.}, 147:1--23.

\bibitem[Kostov et al., 2004]{KostovPonsotSerban2004}
Kostov, I. K., Ponsot, B., and Serban, D. (2004).
\newblock Boundary liouville theory and 2d quantum gravity.
\newblock {\em Nucl.Phys. B}, 683:309--362.

\bibitem[Lawson, 1972]{Lawson1972} Lawson, C.L. (1972).
\newblock Transforming Triangulations.
\newblock{\em Discrete Mathematics}, 3:365-372

\bibitem[Le~Gall, 2013]{LeGall2013}
Le~Gall, J.-F. (2013).
\newblock Uniqueness and universality of the Brownian map.
\newblock {\em Ann. Probab.}, 41:2880--2960.

\bibitem[MZ, 2000]{MZ:2000}
Yu. Manin, P. Zograf (2000)
\newblock  Invertible cohomological field theories and Weil-Petersson volumes. 
\newblock Ann. Inst. Fourier, 50, No. 2 (2000), 519-535.

\bibitem[Mercat, 2007]{MercatHBTT}
Mercat, C. (2007)
\newblock Discrete Riemann Surfaces.
\newblock in {\em Handbook of Teichm\"uller theory. Vol. I}, Eur. Math. Soc., Z\"urich (Ed.) (2007) 541--575

\bibitem[Miermont, 2013]{Miermont2013}
Miermont, G. (2013).
\newblock The Brownian map is the scaling limit of uniform random plane quadrangulations.
\newblock {\em Acta Math.}, 210:319--401.

\bibitem[Papadopoulos, 2007]{HoTT_vol_1}
Papadopoulos A. (Ed.) (2007).
\newblock Handbook of Teichm\"uller Theory -- volume 1.
\newblock {\em IRMA Lectures in Mathematics and Theoretical Physics 11}, \copyright 2007 European Mathematical Society. 

\bibitem[Penner, 1987]{Penner1987}
Penner, R. C. (1987).
\newblock The decorated Teichm\"uller space of punctured surfaces.
\newblock {\em Comm. Math. Phys.}, 113:299 -- 339.

\bibitem[Penner, 2006]{Penner2006}
Penner, R. C. (2006). 
\newblock Lambda lengths, lecture notes from CTQM Master Class taught at Aarhus University in August 2006, \newblock http://www.ctqm.au.dk/research/MCS/lambdalengths.pdf.

\bibitem[Polyakov, 1981]{Polyakov1981207}
Polyakov, A. (1981).
\newblock Quantum geometry of bosonic strings.
\newblock {\em Physics Letters B}, 103(3):207 -- 210.

\bibitem[Rivin, 1994]{Rivin1994}
Rivin, I. (1994).
\newblock Euclidean structures on simplicial surfaces and hyperbolic volume.
\newblock {\em Annals of Mathematics}, 139:553--580.

\bibitem[Takhtajan, Zograf, 1987]{TakhZograf:1987}
Takhtajan, L.A., Zograf, P. G. (1987)
\newblock On the Liouville equation, accessory parameters and the geometry of the Teichmu\"uller space for the Riemann surfaces of genus 0.
\newblock{\em Mat. Sb.} 132 (1987), 147:166

\bibitem[Thurston, 2012]{Thurston2012}
Thurston D. (2012).
\newblock {\em 274 Curves on Surfaces, Lecture 25},
\newblock Notes by Qiaochu Yuan, Berkeley Fall 2012, \\
\newblock https://math.berkeley.edu/~qchu/Notes/274/Lecture25.pdf

\bibitem[Witten, 1990]{Witten:1990}
Witten, E. (1990).
\newblock On the structure of the topological phase of two dimensional gravity.
\newblock {\em Nucl. Phys. B}, 340:281.

\bibitem[Wolpert, 1990]{Wolpert:1990}
Wolpert, S. A. (1990).
\newblock The hyperbolic metric and the geometry of the universal curve.
\newblock{J. Differential Geom.}, 31(2):417:472, 1990

\bibitem[Zograf, 1993]{Zograf:1993}
P. Zograf, P. (1993) 
\newblock The Weil-Petersson volume of the moduli space of punctured spheres. 
\newblock {\em Contemp. Math. 150 (1993)}, 367-372.

\end{thebibliography}

\end{document}